\theoremstyle{plain}
\newtheorem{thm}{Theorem}
\newtheorem{lem}[thm]{Lemma}
\theoremstyle{definition}
\theoremstyle{remark}
\newtheorem{obs}[thm]{Observation}
\crefname{lem}{Lemma}{lemmas}
\newcommand{\ALG}{\textsc{ALG}}
\newcommand{\PP}{\mathbb{P}}
\newcommand{\E}{\mathbb{E}}
\newcommand{\ALGO}{\ensuremath{\mathrm{ALG_o}} }
\newcommand{\ALGC}{\ensuremath{\mathrm{ALG_c}} }
\newcommand{\ALGF}{\ensuremath{\mathrm{ALG_f}} }
\newcommand{\UU}{\mathcal{U}}
\newcommand{\DD}{\mathcal{D}}
\title{The Two-Sided Game of Googol and Sample-Based Prophet Inequalities}
\author{Jos\'e Correa}
\author{Andr\'es Cristi}
\author{Boris Epstein }
\author{Jos\'e A. Soto}
\affil{Universidad de Chile}
\date{}
\begin{document}

\maketitle

\begin{abstract}
The secretary problem or the game of Googol are classic models for online selection problems that have received significant attention in the last five decades. In this paper we consider a variant of the problem and explore its connections to data-driven online selection. Specifically, we are given $n$ cards with arbitrary non-negative numbers written on both sides. The cards are randomly placed on $n$ consecutive positions on a table, and for each card, the visible side is also selected at random. The player sees the visible side of all cards and wants to select the card with the maximum hidden value. To this end, the player flips the first card, sees its hidden value and decides whether to pick it or drop it and continue with the next card.

We study algorithms for two natural objectives. In the first one, similar to the secretary problem, the player wants to maximize the probability of selecting the maximum hidden value. We show that this can be done with probability at least $0.45292$. In the second objective, similar to the prophet inequality, the player wants to maximize the expectation of the selected hidden value. Here we show a guarantee of at least $0.63518$ with respect to the expected maximum hidden value.

Our algorithms result from combining three basic strategies. One is to stop whenever we see a value larger than the initial $n$ visible numbers. The second one is to stop the first time the last flipped card's value is the largest of the currently $n$ visible numbers in the table. And the third one is similar to the latter but to stop it additionally requires that the last flipped value is larger than the value on the other side of its card.

We apply our results to the prophet secretary problem with unknown distributions, but with access to a single sample from each distribution. In particular, our guarantee improves upon $1-1/e$ for this problem, which is the currently best known guarantee and only works for the i.i.d. prophet inequality with samples.
\end{abstract}

\section{Introduction}

In the classic game of Googol we are given $n$ cards with $n$ different arbitrary positive numbers written on them. The cards are shuffled and spread on a table with the numbers facing down. 
The cards are flipped one at a time, in a random uniform order, and we have to decide when to stop. The goal is to maximize the probability that the last flipped card has the overall greatest number.

In this paper we study a variant of this problem that we call {\em The two-sided game of Googol}.  
Similar to the classic version, we are given $n$ cards that we have to flip in random uniform order. However, here the cards have numbers on both sides, so we have $2n$ different arbitrary non-negative numbers instead of $n$, written on each side of each card. The cards are shuffled and spread
on a table so that, independently for each card, either side faces up with probability $1/2$. We can see all the numbers that landed facing up (while the other side is hidden), and flip one card at a time, revealing the number that was facing down. Again we have to decide when to stop. We study the \emph{secretary} and the \emph{prophet} variants. In the secretary variant, the goal is to maximize the probability of stopping at the maximum number over the numbers that landed facing down. In the prophet variant, the goal is to maximize the ratio between the expectation of the last number revealed before stopping, and the expected maximum of the numbers that landed facing down.

This problem naturally fits within the theory of optimal stopping theory which is concerned with choosing the right time to take a particular action, so as to maximize the expected reward. Two landmark examples within this theory are the secretary problem (or game of Googol) described above, and the prophet inequality. In the latter  a gambler faces a finite sequence of non-negative independent random variables $X_1, \ldots, X_n$ with known distributions $F_i$ from which iteratively a prize is drawn. After seeing a prize, the gambler can either accept the prize and the game ends, or reject the prize and the next prize is presented to her. The classical result of Krengel and Sucheston, and Garling \cite{KS77,KS78}, states that the gambler can obtain at least half of the expected reward that a prophet can make who knows the realizations of the prizes beforehand. That is, $\sup \{ \E[X_{t}] \, : \, t \text{ stopping rule} \, \} \geq \frac{1}{2} \E\{\sup_{1 \leq i \leq n} X_i\}$. Moreover, Krengel and Sucheston also showed that this bound is best possible. Remarkably, Samuel-Cahn \cite{SC84} showed that the bound of $1/2$ can be obtained by a simple threshold rule, which stops as soon as a prize is above a fixed threshold. We refer the reader to the survey by Hill and Kertz \cite{HK92} for further classic results.

In recent years, motivated by the connections between optimal stopping and posted price mechanisms \cite{HKS2007,Chawla2010,CFPV19}, there has been a regained interest in studying algorithms for variants of the classic prophet inequalities. The recent survey of Lucier \cite{L2017} is a good starting point to understand the economic perspective of prophet inequalities, while the recent letter of Correa et al. \cite{CFHOV18} provides a recent overview of results for single choice prophet inequalities. Due to this regained interest, new variants of the prophet inequality setting have been studied, including problems when the gambler has to select multiple items, when the selected set has to satisfy combinatorial constraints, when the market is large, when prior information is inaccurate, among many others (see e.g. \cite{AE17,KW12,DFKL17,EHKS18,DK19}). 

Particularly relevant to our work in the \emph{prophet secretary} problem. In this version, the random variables are shown to the gambler in uniform random order, as in the secretary problem. The problem was first studied by Esfandiari et al. \cite{esfandiari2015} who found a bound of $1-1/e$. Later, Eshani et al. \cite{EHKS18} showed that the bound of $1-1/e$ can even be achieved using a single threshold. The factor $1-1/e$ was first beaten by Azar et al \cite{ACK18} by a tiny margin, while the current best bound is 0.67 \cite{CSZ19}. In terms of upper bounds it was unclear until very recently whether there was a separation between prophet secretary and the i.i.d prophet inequality, where the random variables are identically distributed. For this problem, Hill ad Kertz \cite{HK82} provided the family of worst possible instances from which \cite{K86} proved the largest possible bound one could expect is $0.7451$ and Correa et al. \cite{CFHOV17} proved that this value is actually tight. Very recently, Correa et al. show that no algorithm can achieve an approximation factor better than $\sqrt{3}-1\approx 0.71$ for the prophet secretary problem (see the full version of \cite{CSZ19}). Interestingly, the tight is still~unknown.

On the other hand, some recent work has started to investigate data-driven versions of the prophet inequality since the full distributional knowledge seems quite strong for many applications. In this context, Azar et al. \cite{AKW14} consider a version in which the gambler only has access to one sample from each distribution. They prove a prophet inequality with an approximation guarantee of 1/4 for this problem and left open whether achieving 1/2 is possible. This question was recently answered on the positive by Wang \cite{W18} who uses an elegant approach to prove that just taking the maximum of the samples as threshold leads to the optimal guarantee. If the instance is further specialized to an i.i.d.~instance, the best known bound is $1-1/e$ while it is known that one cannot achieve a guarantee better than $\ln(2)\approx 0.69$ \cite{CDFS19}.

Our problem is closely related to the single-sample version of the secretary problem and the prophet secretary problem, which combine the data-driven approach of Azar et al \cite{AKW14} and the random arrival order of Esfandiari et al. \cite{esfandiari2015}. In these problems we are given $n$ distributions, which are unknown to us, and only have access to a single sample from each. Then $n$ values are drawn, one from each distribution, and presented to us in random order. When we get to see a value, we have to decide whether to keep it and stop, or to drop it and continue. Again in the  single-sample secretary problem the goal is to maximize the probability of stopping with the largest value while in the single-sample prophet secretary problem the goal is to maximize the expectation of the value at which we stopped divided by the expectation of the maximum of the values. 

It is immediate to observe that an algorithm for the two-sided game of Googol that has a guarantee of $\alpha$, for either the secretary or prophet variants of the problem, readily implies the same approximation guarantee for the single-sample secretary problem and single-sample prophet secretary problem respectively \cite{W18}. Indeed, if we consider an instance of the two-sided game of Googol where the values on card $i$ are independent draws from a distribution $F_i$ we exactly obtain the single-sample secretary and the single-sample prophet secretary problems.

\subsection{Our results}

Most of our results come from analyzing three basic algorithms. The first is the {\em Open moving window} algorithm, in which we stop the first time the card just flipped is larger than all the currently visible values. The second is the {\em Closed moving window} algorithm, in which we additionally require that in the last flipped card the value just revealed is larger than the value that was visible before. Finally we consider the {\em Full window} algorithm in which we simply take the largest value initially visible as threshold (and therefore stop the first time we see a value larger than all values we have seen). 

We first study the secretary variant of the two-sided game of Googol in which the goal is to maximize the probability of choosing the maximum hidden value. For this problem we prove that the closed window algorithm gets the maximum value with probability 0.4529. Of course, this value is more than $1/e$ which is the best possible for the classic secretary problem \cite{D63,L61,F89}, but it is less than 0.5801 which is the best possible guarantee for the full information secretary problem, i.e., when the full distribution is known \cite{GM66}. 

Next, we concentrate in the prophet variant of the two-sided game of Googol in which the goal is to maximize the ratio between the expectation of the chosen value, and that of the maximum hidden value. In this case we start by observing that all three algorithms described above can only give a guarantee of 1/2. Indeed, consider an instance with two cards. Card 1 has values 0 and 1 on each side while card 2 has the numbers $\varepsilon$ and $\varepsilon^2$. Clearly the expectation of the maximum hidden value is $1/2+O(\varepsilon)$ (the value 1 is hidden with probability 1/2) while the open moving window algorithm gets in expectation $1/4 + O(\varepsilon)$ (to get the value of 1 the algorithm needs that it is hidden and that card 1 is the first card). For the other two algorithms, consider the instance in which card 1 has the values 1 and $1-\varepsilon$ while card 2 has the values $\varepsilon$ and 0. Clearly the expectation of the maximum hidden value is $1 - O(\varepsilon)$, but since now neither of the algorithms can stop when $1-\varepsilon$ is revealed, both algorithms obtain $1/2+O(\varepsilon)$. However, by randomizing the choice of the algorithm we significantly improve the approximation ratio. 

Our main result in the paper is to show that a simple randomization of the three proposed algorithms achieves a guarantee of $0.635>1-1/e$. Interestingly, our bound surpasses that of Azar et al \cite{ACK18} which until very recently was the best known bound for (the full information) prophet secretary. Furthermore, our bound also beats the bound of $1-1/e$ obtained by Correa et al. \cite{CDFS19} for the i.i.d.~single-sample prophet inequality. So our bound not only improves upon these known bounds, but also works in a more general setting. The key behind the analysis is a very fine description of the performance of each of the three basic algorithms. Indeed for each algorithm we exactly compute the probability that they obtain any of the 2n possible values.\footnote{In this respect, our result for the secretary variant can be seen as a warm up for the prophet variant of the problem.} With this performance distribution at hand it remains to set the right probabilities of choosing each so as to maximize the expectation of the obtained value. 

To wrap-up the paper we consider a large data-set situation which naturally applies to a slightly restricted version of the single sample prophet secretary problem. The assumption states that, with high probability, the if we rank all $2n$ values from largest to smallest, the position of maximum minimum value a card lies far down the list (i.e., when ranking the 2n values the first few values correspond to a maximum in its card). This assumption holds for instance in the single-sample i.i.d.~prophet inequality, under the large market assumption used by Abolhassan et al. \cite{AE17}, or whenever the underlying distributions of the prophet secretary instance overlap enough. For this variant we design an optimal moving window algorithm and prove that it achieves an approximation ratio of $0.642$. 

In table \ref{tab:summary} we summarize our results and the previous best bounds for the problems considered in this paper.

\begin{table}[t]
\centering
\begin{tabular}{|c|c|c|c|}
\hline
\multicolumn{2}{|c|}{Instance}                                                                                                                                 & \begin{tabular}[c]{@{}c@{}}Prophet\\  ($\E(ALG)\geq \alpha \E(OPT)$)\end{tabular}             & \begin{tabular}[c]{@{}c@{}}Secretary\\  ($\PP(ALG=OPT)\geq \alpha$)\end{tabular}                                                             \\ \hline
\multirow{2}{*}{\begin{tabular}[c]{@{}c@{}}Independent\\ (Prophet \\ Secretary)\end{tabular}} & \begin{tabular}[c]{@{}c@{}}Samples\\ $\,$\end{tabular}         & $\frac{1}{2}$ \cite{W18} $\leq 0.635$ [*] $\leq \alpha \leq \ln(2)$ \cite{CDFS19}      & \multirow{3}{*}{\begin{tabular}[c]{@{}c@{}}$\frac{1}{e}$ \cite{D63} $\leq 0.452$ [*] $\leq \alpha$\\ $ \leq 0.580$ \cite{GM66}\end{tabular}} \\ \cline{2-3}
                                                                                              & \begin{tabular}[c]{@{}c@{}}Known\\  distributions\end{tabular} & $0.669$ \cite{CSZ19} $\leq \alpha \leq \sqrt{3}-1$ \cite{CSZ19}                        &                                                                                                                                              \\ \cline{1-3}
\multirow{2}{*}{I.I.D.}                                                                       & \begin{tabular}[c]{@{}c@{}}Samples\\ $\,$\end{tabular}         & $1-\frac{1}{e}$ \cite{CDFS19} $\leq 0.635$ [*] $\leq \alpha \leq \ln(2)$ \cite{CDFS19} &                                                                                                                                              \\ \cline{2-4} 
                                                                                              & \begin{tabular}[c]{@{}c@{}}Known \\ distribution\end{tabular}  & $0.745$\cite{CFHOV17}$\leq \alpha \leq 0.745$ \cite{HK82,K86}                          & $\alpha=0.580$ \cite{GM66}                                                                                                                   \\ \hline
\end{tabular}
\caption{Summary of known results for prophet secretary and secretary problems in the single-sample and full information settings. Results marked [*] are proven in this paper.}
\label{tab:summary}
\end{table}

\subsection{Preliminaries and notation}

\paragraph{Formal problem statement.} We are given $2n$ different and arbitrary positive numbers, organized into $n$ pairs that we denote as $\{(a_i,b_i)\}_{i=1}^n$, representing the numbers written in both sides of each card. These pairs of numbers are shuffled into sets $\UU$ and $\DD$: for each $i=1,...,n$ an independent unbiased coin is tossed, if the coin lands head, then  $a_i=U_i\in \UU$ and $b_i= D_i\in \DD$, otherwise $b_i=U_i \in \UU$ and $a_i=D_i \in \DD$. The set $\UU=\{U_1,...,U_n\}$ represents the numbers that landed facing up, and the numbers in $\DD=\{D_1,...,D_n\}$ are those facing down. The numbers in $\UU$ are revealed, and a random uniform permutation $\sigma\in \Sigma_n$ is drawn. Then the elements in $\DD$ are revealed in $n$ steps: at
each step $i\in[n]$, the value $D_{\sigma(i)}$ and the index
$\sigma(i)$ are revealed. After this, we must decide
whether to stop or to continue to step $i+1$.

We study algorithms (stopping rules) for the two variants. Let $\tau$ be the step at which we stop. In the secretary variant the objective is to maximize $\PP(D_{\sigma(\tau)}= \max \DD)$, and in the prophet variant the objective is to maximize $\E(D_{\sigma(\tau)})/\E(\max \DD)$. Note that the latter is equivalent to just maximizing $\E(D_{\sigma(\tau)})$ since the algorithm does not control $\E(\max \DD)$.

 \paragraph{Ranking and couples} Our analyses rely of the ranking of the 2n numbers of the instance, so let us denote by $Y_1 > Y_2 > \cdots >Y_{2n}$ the $2n$ numbers in $\UU\cup \DD$ ordered from the largest to the smallest. We say that indices $i$ and $j$ are a couple, and denote it by $i\sim j$ if $Y_i$ and $Y_j$ are written in the two sides of the same card, i.e., if $(Y_i,Y_j)=(a_\ell,b_\ell)$ or $(Y_j,Y_i)=(a_\ell,b_\ell)$ for some $\ell\in [n]$. Let $k$ be the smallest index so that $k$ is the couple of some $k'<k$. In particular, $Y_1, Y_2, \dots, Y_{k-1}$ are all written on different cards. Note that $k$ and $k'$ only depend on numbers in the cards\footnote{This notation was introduced by
 Wang~\cite{W18}}, i.e., on the numbers $\{(a_i,b_i)\}_{i=1}^n$, and not on the coin tosses or on the permutation $\sigma$. We will restrict all our analysis to the numbers $Y_1,...,Y_k$, which is easily justified by the observation that $\max \DD$ always lies in the set $\{Y_1,...,Y_k\}$.

\paragraph{Random arrival times.} For most of our analyses it will be useful 
to consider the following reinterpretation of the randomness of both the coins tosses (hidden sides) and the random permutation (flipping order). Consider that each of the $2n$ numbers arrives in a random uniform time in the interval $(-1,1]$ as follows. For every index $\ell$ smaller than its couple $\ell'$ (i.e. $\ell \sim \ell', Y_{\ell} > Y_{\ell'}$) an independent random arrival time $\theta_\ell$ uniform in $(-1,1]$ is sampled and $\ell'$ receives opposite arrival time $\theta_{\ell'}=C(\theta_\ell)$, where 
 \[
C(x)=\begin{cases}x-1,&\text{if $x > 0$,}\\x+1,&\text{otherwise.}\end{cases}
\]
The reinterpretation can the be done as follows. For each $j\in [2n]$, if $\theta_j > 0$, the number $Y_j$ is facing down, i.e., $Y_j\in \DD$, and if $\theta_j\leq 0$ then $Y_j$ is facing up, i.e., $Y_j\in \UU$. Therefore, we get to see all values whose corresponding arrival time is negative and the order in which we scan the hidden values $Y_j \in \DD$ is increasing in $\theta_j$.

Throughout the paper we use the term \emph{value} to refer to the numbers written in the cards $Y_1,Y_2,\ldots,Y_{2n}$, while we use the term \emph{element} to refer to a side of a particular card, that is an element in $\{U_1,\ldots,U_n,D_1,\ldots,D_n\}$. Of course these sets are the same so the point is that the $i$-th value corresponds to $Y_i$, whereas the $i$-th element corresponds to the $i$-th number in the list $U_{\sigma(1)},\ldots,U_{\sigma(n)},D_{\sigma(1)},\ldots,D_{\sigma(n)}$.
\section{Basic Algorithms and statement of our results}
\label{sec:basic_alg}

Once the random coins for all cards and the random uniform permutation are all selected, we are left with the following situation. A list $E=(e_1,\dots, e_{2n})=(U_{\sigma(1)}, \dots, U_{\sigma(n)},D_{\sigma(1)},\dots, D_{\sigma(n)})$ of $2n$ different positive elements are presented one-by-one to an algorithm. The algorithm must observe and skip the first $n$ elements of the list, since they correspond to the values that landed facing up on the cards. The next $n$ elements correspond to values that landed facing down (in such a way that the $s$-th and the $s+n$-th elements form a couple). The algorithm must decide immediately after observing the $s$-th  element whether to select it and stop, or to continue with the next element.  

We now present generic \emph{moving window} algorithms for selecting one of the last $n$ elements of a given list $E=(e_1,\dots, e_{2n})$. These family of algorithms form the basis of the algorithms employed in this paper, which are described next.

\paragraph{Generic moving window algorithm.} 
For every $s\in \{n+1,\dots, 2n\}$, the algorithm first specifies, possibly in a random or implicit way, a \textbf{window} of elements \begin{align}
W_s=\{e_r\colon r\in \{\ell_s, \dots, s-1\}\}
\end{align}  scheduled to arrive immediately before $s$ (the window may be empty). Then the algorithm proceeds as follows. It observes the first $n$ elements without selecting them. For every $s\geq n+1$, the algorithm observes the $s$-th element, $e_s$, and selects it if $e_s$ is larger than every element in its window\footnote{If $W_s=\emptyset$ then we say that $e_s$ is trivially selected.}, i.e. if $e_s> \max W_s$. Otherwise, the algorithm rejects it and continues.

We now present three basic moving window algorithms for deciding when to stop. Each of them can be described by a sequence of left extremes $(\ell_s)_{s=n+1}^{2n}$ that will define the elements inside the window $W_s$.

\paragraph{Open moving window algorithm (\ALGO).} This algorithm corresponds to the following strategy: flip the next unflipped card and accept its value $x$ if and only if $x$ is the largest of the $n$ values that are currently visible (i.e., the $i$-th element is accepted if and only if it is larger than the $n-i$ currently unflipped cards, and is also larger than the $i-1$ previously flipped cards). In terms of the generic moving window, the left extreme of $W_s$ is $\ell_s = s-n+1$ for $n+1\leq s \leq 2n$.

\paragraph{Closed moving window algorithm (\ALGC).} This algorithm works similarly to the previous one, with a slight difference. In this strategy, the element on the back of the last flipped card, the one that was facing up, is also required to be smaller than the element to be selected. In terms of the generic moving window, the left extreme of $W_s$ is $\ell_s = s-n$ for $n+1\leq s \leq 2n$.

\paragraph{Full window algorithm (\ALGF).} This algorithm corresponds to the algorithm presented by Wang~\cite{W18}. It stops when the revealed element is larger than all element seen so far. Note that this is equivalent to stop with the first element that is larger than all elements that landed facing up. In terms of the generic moving window, the left extreme of $W_s$ is $\ell_s = 1$ for $n+1\leq s \leq 2n$.

\paragraph{}Combinig these three algorithms, we can obtain algorithm $ALG^*_r$, which picks \ALGO and \ALGC with probability $(1-r)/2$ and \ALGF with probability $r$. Our main results are:

\begin{restatable}{thm}{SecretaryThm}
\label{thm:main_secretary}
For any instance of the two-sided game of Googol, 
$$\PP(ALG_c = \max \DD)\geq \ln 2 \left(1 - \frac{\ln 2}{2} \right) \approx 0.45292.$$
\end{restatable}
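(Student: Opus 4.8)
The plan is to analyze \ALGC in the random-arrival-times model, via the ``shadow'' picture it induces. Recall $\max\DD=Y_m$ with $m=\min\{i:\theta_i>0\}\le k$, and for each value $Y_i$ call the half-circle $(\theta_i,\theta_i+1]$ (taken mod $2$) its \emph{shadow}. Unwinding the window test, \ALGC stops at the first face-down value, in increasing order of $\theta$, that is not covered by the shadow of any larger value; hence \ALGC selects $\max\DD=Y_m$ exactly when (A) $\theta_i<\theta_m-1$ for every $i<m$ (so that $Y_m$ escapes all shadows of larger values), and (B) every face-down value arriving before time $\theta_m$ is covered by the shadow of some larger value. A short observation removes the case $m=k$: the couple $Y_{k'}$ of $Y_k$ sits at time $\theta_k-1$, the closed left endpoint of the window $[\theta_k-1,\theta_k)$ against which \ALGC tests $Y_k$, and since $Y_{k'}>Y_k$ the test fails. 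Hence only $m\le k-1$ contributes, and for those $m$ the times $\theta_1,\dots,\theta_m$ are i.i.d.\ uniform on $(-1,1]$, since by minimality of $k$ each such index is the smaller member of its couple and thus an independent uniform draw.

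Conditioning on $\theta_m=t\in(0,1]$, event (A) has probability $(t/2)^{m-1}$, and under (A) the $m-1$ larger values are i.i.d.\ uniform on $(-1,t-1)$, so their shadows already cover the down-phase interval $(0,\theta^{*}+1]$, where $\theta^{*}=\max_{i<m}\theta_i<t-1$. Thus (B) only constrains the face-down values landing in $(\theta^{*}+1,t)$, which must then be caught by shadows of the \emph{medium} values $Y_{m+1},Y_{m+2},\dots$. Writing $g_m(t)$ for the conditional probability of this reduced coverage event, one obtains the exact identity
\[
\PP(\ALGC=\max\DD)=\sum_{m=1}^{k-1}\int_0^1 \tfrac12\,(t/2)^{m-1}\,g_m(t)\,dt .
\]
The leading behaviour is visible already: if $g_m\equiv1$ the $m$-th term equals $\frac1{m2^m}$ and $\sum_{m\ge1}\frac1{m2^m}=\ln 2$, so the target $\ln 2\,(1-\tfrac{\ln2}{2})$ should emerge once the true $g_m(t)$ is inserted and the series summed; a valid lower bound then follows by lower-bounding each $g_m(t)$ uniformly over the instance's couple structure.

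The crux, and the step I expect to be the main obstacle, is pinning down $g_m(t)$: condition (B) is the only place where the values of index larger than $k$ and the full couple structure of the instance enter, so a crude estimate will not recover the precise constant. I would compute it by peeling the ranked medium values off one at a time, casing on whether each is face-up or face-down and on where it lands relative to the currently uncovered sub-interval, thereby reducing (B) to the same kind of coverage event on a shorter interval and obtaining a recursion for $g_m$. The delicate point is that couples behave as antipodal pairs of arrivals, so the successive uncovered intervals are not independent of the earlier randomness; controlling that dependence --- plausibly via a symmetry or coupling argument that, conditioned on the ranking, replaces each couple by one uniform arrival plus its antipode --- is exactly the fine description of the algorithm's performance that the whole analysis rests on.
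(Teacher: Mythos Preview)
Your shadow reformulation is correct, and the integral identity $\PP(\ALGC=\max\DD)=\sum_{m=1}^{k-1}\int_0^1\tfrac12(t/2)^{m-1}g_m(t)\,dt$ is valid. But you stop exactly where the real argument begins, and your diagnosis of what computing $g_m(t)$ requires would send you down a much harder path than necessary.

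You expect condition (B) to drag in values of index larger than $k$ and the full couple structure, and you propose a recursive peeling of medium values while tracking antipodal dependencies. In fact the winning probability is an \emph{exact} function of $k$ alone, and the coverage problem dissolves under the right conditioning. The missing idea is to decompose one step further by $j$, the rank of the largest value whose arrival time lies in the window $[\theta_m-1,\theta_m)$. Any length-one window contains $\theta_k$ or $\theta_{k'}=C(\theta_k)$, so $m<j\le k$ always and nothing beyond $Y_k$ ever enters. By \cref{lem:MWLemma}, once $J=j$ is fixed your event (B) collapses to the single requirement that $\theta_j\le 0$ or, if $\theta_j>0$, that some $\theta_\ell$ with $\ell<j$ lies in $[\theta_j-1,\theta_j)$; there is no need to track all face-down medium values. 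Now, rather than conditioning on $\theta_m=t$ and (A), condition on the event that $C(\theta_1),\dots,C(\theta_{m-1}),C(\theta_{m+1}),\dots,C(\theta_{j-1}),\theta_j$ all lie in $[\theta_m-1,\theta_m)$ --- this is precisely ``$J=j$ together with (A)'' and has probability $1/2^{j-1}$ (or $1/2^{k-2}$ when $j=k$). Under it, \cref{lem:independent} turns these $j-1$ points together with $1-\theta_m$ into $j$ i.i.d.\ uniforms; the event $Y_m=\max\DD$ becomes a rank condition on $m$ of them (probability $1/m$), event (B) becomes ``$\theta_j$ is not the overall maximum'' (probability $1-1/j$ for $j<k$, and $1$ for $j=k$ since $Y_{k'}$ is always in $Y_k$'s window), and the two are independent order statistics. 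Summing $\tfrac{1}{2^j}\cdot\tfrac1m\cdot(1-\tfrac1j)$ over $m<j<k$ plus the $j=k$ boundary term gives a closed form $F(k)$; one checks $F(k+1)-F(k)=(1-H_{k-1})/(k\,2^k)\le0$ and evaluates $\lim_{k\to\infty}F(k)=\ln2\,(1-\tfrac{\ln2}{2})$. No recursion, no peeling, and no instance-dependence beyond the single integer $k$.
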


\begin{restatable}{thm}{ProphetThm}
\label{thm:main_prophet}
 There exists a value $r^*\in [0,1]$ such that
 for any instance of the two-sided game of Googol, $\E(\ALG^*_{r^*})\geq
 \alpha \E(\max \DD)$ with $\alpha = \frac{4-5\ln 2}{5-6\ln 2}\approx 0.635184$.
\end{restatable}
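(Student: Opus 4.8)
The plan is to obtain an exact, instance-independent description of the probability that each of the three basic algorithms selects the value $Y_j$ for every $j \in \{1, \dots, k\}$, and then to choose the mixing probability $r^*$ optimally. Fix an instance with its ranking $Y_1 > \cdots > Y_{2n}$, couple structure $\sim$, and distinguished indices $k' < k$ (recall $\max\DD \in \{Y_1,\dots,Y_k\}$, and all of $Y_1,\dots,Y_{k-1}$ lie on distinct cards). The key simplification is the \emph{random arrival times} reinterpretation: each value $Y_j$ gets a time $\theta_j$ uniform in $(-1,1]$, couples get antipodal times $\theta_{j'} = C(\theta_j)$, values with $\theta_j > 0$ are face-down, and we scan the face-down values in increasing order of $\theta_j$. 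In this language each moving-window algorithm becomes: accept the first face-down value whose time exceeds all the times in its window, where the window is the set of values whose times fall in an appropriate interval just before it. I would first carry out this translation and then compute, for each $j$, the probability $p_j^{o}, p_j^{c}, p_j^{f}$ that $\ALGO, \ALGC, \ALGF$ respectively stop exactly at $Y_j$.

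The core computation proceeds value by value. For $\ALGF$ the window of any face-down element is all of $Y_1,\dots$ that arrived face-up, so Wang's analysis applies essentially verbatim: $\ALGF$ picks $Y_j$ iff $Y_j$ is face-down, all of $Y_1,\dots,Y_{j-1}$ are face-up, and $Y_j$ arrives before all those face-up elements that are themselves larger than everything face-up so far — the usual computation gives a clean closed form in terms of $j$ (and involving $k$). For $\ALGO$ and $\ALGC$ the window is a moving interval of length (in time) one unit, so the event ``$Y_j$ is selected'' unfolds as: $Y_j$ is face-down with time $\theta_j = t > 0$; every value larger than $Y_j$ that has time in $(t-1, t]$ must be — wait, they can't be, since those are exactly the potential blockers; so we need all $Y_1,\dots,Y_{j-1}$ with arrival time in the relevant window to have been already rejected, which by induction forces a condition on their times. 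The upshot, which I would prove by conditioning on $t$ and integrating, is that $p_j^{o}$ and $p_j^{c}$ also have explicit closed forms; the difference between open and closed windows is precisely whether the couple partner of $Y_j$ (which sits at the antipodal time) lies inside or outside the window, which changes the formula for the couple index $k$ versus generic indices. I expect the bookkeeping around the couple $k \sim k'$ — the first index that collides with an earlier one — to be the fiddliest part, since for $j \geq k$ the ``all larger values on distinct cards'' structure breaks and one must argue (as the excerpt hints) that restricting to $Y_1,\dots,Y_k$ loses nothing because $\max\DD$ is always among them.

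With the three performance vectors $(p_j^{o})_j$, $(p_j^{c})_j$, $(p_j^{f})_j$ in hand, the mixed algorithm $\ALG_r^*$ picks $Y_j$ with probability $q_j(r) = \frac{1-r}{2}(p_j^{o}+p_j^{c}) + r\,p_j^{f}$. To lower bound $\E(\ALG_r^*)/\E(\max\DD)$ uniformly over instances, the standard move is to bound $\E(\max\DD) \leq \sum_{j=1}^k \PP(Y_j = \max\DD)\,Y_j$ and $\E(\ALG_r^*) = \sum_{j=1}^k q_j(r) Y_j$, and then show that for the right $r = r^*$ one has $q_j(r^*) \geq \alpha \cdot \PP(Y_j = \max\DD)$ for every $j$ and every instance, with $\alpha = \frac{4 - 5\ln 2}{5 - 6\ln 2}$. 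Since $\PP(Y_j = \max\DD)$ and the $p_j$'s all depend only on $j$ and on $k$ (not on the actual numbers), this reduces to a finite-dimensional — indeed essentially one-parameter, after taking $n \to \infty$ and replacing sums by integrals — inequality. I would (i) write down the limiting continuous versions of all these quantities as functions of $x = j/n \in (0,1]$ (and of $y = k/n$), (ii) identify the worst case, which I expect to be a two-point or threshold instance concentrating mass where $q_j(r)/\PP(Y_j=\max\DD)$ is minimized, (iii) solve $\frac{d}{dr}$ of that worst-case ratio $= 0$ to pin down $r^*$, and (iv) verify the resulting value equals $\frac{4-5\ln2}{5-6\ln2}$. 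The main obstacle is step (ii): proving that the pointwise ratio is minimized at an identifiable extreme instance — equivalently, that the per-value inequality $q_j(r^*) \geq \alpha\,\PP(Y_j=\max\DD)$ holds for \emph{all} $j$ simultaneously — rather than merely at a guessed worst case; handling this likely requires a careful monotonicity or convexity argument in $j$ (and in $k$), and checking the boundary cases $j$ small, $j$ near $k$, and $k$ small separately.
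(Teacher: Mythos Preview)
Your overall strategy --- compute the exact distribution $p_j^{o}, p_j^{c}, p_j^{f}$ of the selected value for each basic algorithm, form the mixture $q_j(r)$, and then compare to the distribution of $\max\DD$ --- is exactly the paper's approach. But the comparison you propose has a genuine gap.

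You aim to show the \emph{pointwise} inequality $q_j(r^*) \geq \alpha\,\PP(\max\DD = Y_j)$ for every $j \leq k$. This fails at $j=k$. Neither $\ALGC$ nor $\ALGF$ can ever accept $Y_k$ (its couple $Y_{k'}$ is always in its window), and $\PP(\ALGO = Y_k)=1/2^{k-1}$, so $q_k(r) = \tfrac{1-r}{2}\cdot\tfrac{1}{2^{k-1}}$. Since $\PP(\max\DD = Y_k)=1/2^{k-1}$, the pointwise ratio at $j=k$ is $(1-r)/2 \leq 1/2$ for every $r\in[0,1]$, strictly below $\alpha\approx 0.635$. Already for $k=2$ this kills the argument. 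The fix, and what the paper actually does, is to compare the \emph{cumulative} distributions: show $\PP(\ALG^*_{r^*}\geq Y_i)\geq \alpha\,\PP(\max\DD\geq Y_i)$ for all $i\leq k$, which still implies $\E(\ALG^*_{r^*})\geq \alpha\,\E(\max\DD)$. For $i\leq k-1$ the paper obtains this by summing a pointwise bound $\PP(\ALG^*_{r^*}=Y_i)\geq \alpha/2^{i-1}$ (your idea works there), but the case $i=k$ needs a separate direct lower bound on the stopping probability $\PP(\ALG^*_{r^*}\geq Y_k)$.

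A second issue: the quantities $\PP(\max\DD=Y_j)$ and $p_j^{o},p_j^{c},p_j^{f}$ depend only on $j$ and $k$, not on $n$ at all (the formulas are sums of terms like $1/(2^j j)$). So the proposed continuous limit $x=j/n$, $y=k/n$ is not the right reduction; the problem is already a discrete inequality in $j$ and $k$, and the worst case turns out to be governed by small $j$ (specifically $j=1,2$) and by the $i=k$ endpoint, not by any asymptotic regime. The actual optimization of $r$ comes from making the $j=2$ pointwise bound tight, which yields $r^*=\tfrac{3-4\ln 2}{5-6\ln 2}$ and $\alpha = 2a(r^*)=\tfrac{4-5\ln 2}{5-6\ln 2}$.
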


The moving windows can also be seen from the perspective of the random arrival times.
For $\ALG_o$, the window for an element that arrives at time $\theta>0$ can be defined as all elements arriving in $(\theta-1,\theta)$. For $\ALG_c$, this changes to $[\theta-1,\theta)$, while for $\ALG_f$ it becomes $[-1,\theta)$. Therefore, we can define
threshold functions\footnote{These thresholds specify the minimum value under which the algorithm stops for each possible arrival time.} for \ALGO, \ALGC, and \ALGF as $T_o(\theta) = \max \{ Y_j: \theta_j \in (\theta-1,\theta)\}$, $T_c(\theta)= \max \{ Y_j: \theta_j \in [\theta-1,\theta)\}$ and $T_f(\theta)= \max \{ Y_j: \theta_j \in [-1,\theta)\}$, respectively.
This motivates the definition of a fourth algorithm, $\ALG(t)$, for which the window
for an element that arrives at time $\theta$ is defined as all elements that arrive in $[\max\{-1,\theta-1-t\}, \theta)$.

Our third result shows that for large $k$, this type of algorithms perform better.

\begin{restatable}{thm}{LargekThm} For $t^*\approx 0.1128904$, and any instance of the two-sided game of Googol, as $k$ goes to infinity, $\E[\ALG(t^*)]/\E[\max \DD]$ is at least $R_\infty(t^*)\approx 0.6426317$.
\end{restatable}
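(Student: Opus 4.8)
The plan is to work in the random-arrival-time model, pass to the $k\to\infty$ limit in which the top values become mutually independent, compute the per-value selection probabilities of $\ALG(t)$ exactly in that limit, and then reduce the worst case over instances to a one-dimensional optimization over $t$. First, since $\max\DD\in\{Y_1,\dots,Y_k\}$ we lower bound $\E[\ALG(t)]\ge\sum_{j=1}^k p_j^{(k)}(t)\,Y_j$, where $p_j^{(k)}(t)=\PP(\ALG(t)\text{ selects }Y_j)$, while $\E[\max\DD]=\sum_{j=1}^k q_j^{(k)}Y_j$ with $q_j^{(k)}=\PP(Y_j=\max\DD)$. Because $Y_1,\dots,Y_{k-1}$ lie on distinct cards, their arrival times $\theta_1,\dots,\theta_{k-1}$ are i.i.d.\ uniform on $(-1,1]$, so $q_j^{(k)}\to 2^{-j}$, and for every fixed $j$, $p_j^{(k)}(t)\to p_j^\infty(t)$, a universal constant depending only on $j$ and $t$ (it is determined by the relative order of the values and their arrival times, not by the magnitudes). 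One has to check that the dependence on cards of rank $>k$ washes out and that the tails $\sum_{j>m}p_j^{(k)}(t)$ and $\sum_{j>m}q_j^{(k)}$ are uniformly small, which follows from the geometric decay $p_j^{(k)}(t),q_j^{(k)}=O(2^{-j})$ forced by the requirement that the $j-1$ values above $Y_j$ avoid its look-back window.

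The crux is computing $p_j^\infty(t)$. Condition on the arrival time $\theta\in(0,1]$ of $Y_j$ (it must be hidden). Then $\ALG(t)$ selects $Y_j$ if and only if: (a) none of the $j-1$ larger values lands in the window $[\max\{-1,\theta-1-t\},\theta)$ of $Y_j$; and (b) no value smaller than $Y_j$ arrives in $(0,\theta)$ and is a record within its own window. Condition (a) also guarantees that no \emph{larger} value stops the algorithm before time $\theta$, since $(0,\theta)$ is contained in that window; and (a) has probability $\big((2-w(\theta))/2\big)^{j-1}$ with $w(\theta)=\min\{1+t,1+\theta\}$. The work is in (b): whether a smaller value is a window-record is correlated across values, and correlated with the positions of the larger values (which are constrained by (a)); moreover the first window --- the one reaching down to $-1$ --- behaves differently from later ones, so the regimes $\theta\le t$ and $\theta>t$ must be separated. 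I expect this to unwind into a recursion in $j$ (equivalently, an integral equation for $\sum_{j}p_j^\infty(t)z^{j}$), whose solution gives $p_j^\infty(t)$, and hence $P_m^\infty(t):=\sum_{j\le m}p_j^\infty(t)$, in closed form in terms of elementary functions of $t$.

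For the worst case, write $\E[\ALG(t)]=\int_0^\infty\PP(\ALG(t)>u)\,du$ and $\E[\max\DD]=\int_0^\infty\PP(\max\DD>u)\,du$. For a threshold $u$ above exactly the top $m$ values, $\PP(\max\DD>u)=1-2^{-m}$ and $\PP(\ALG(t)>u)=P_m^\infty(t)$ in the limit, so
\[
\frac{\E[\ALG(t)]}{\E[\max\DD]}\ \ge\ \inf_{m\ge1}\frac{P_m^\infty(t)}{1-2^{-m}}\ =:\ R_\infty(t),
\]
by the standard fact that a nonnegative decreasing vector is a nonnegative combination of the staircases $(1,\dots,1,0,\dots)$. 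It then remains to maximize $R_\infty(t)$ over $t\in[0,1]$: identify where the infimum over $m$ sits --- I expect it is attained as $m\to\infty$, so that $R_\infty(t)=\sum_{j\ge1}p_j^\infty(t)$ is simply the limiting probability that $\ALG(t)$ stops --- solve the first-order condition to obtain $t^*\approx0.1128904$, and evaluate $R_\infty(t^*)\approx0.6426317$, checking the second-order condition so that this is a genuine maximum.

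The main obstacle is Step 2: correctly accounting for the event that the algorithm has not yet stopped when $Y_j$ arrives requires tracking which smaller values are records within their moving windows, and these events are neither mutually independent nor independent of the (constrained) positions of the larger values; finding the recursion or integral equation that collapses this bookkeeping into a closed form is where the real effort lies. Steps 1 and 3 are comparatively routine once the closed form for $p_j^\infty(t)$ is in hand.
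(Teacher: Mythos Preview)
Your overall strategy --- approximate stochastic dominance via $\inf_{m\ge1} P_m^\infty(t)/(1-2^{-m})$ --- is exactly what the paper does. Two specific points of your plan are off, however, and the second one would lead you to the wrong answer.

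First, Step~2 is set up to be much harder than necessary. You phrase condition (b) as ``no \emph{smaller} value in $(0,\theta)$ is a window-record,'' which drags in infinitely many lower-ranked values whose record events are correlated with one another and with the constrained positions of $Y_1,\dots,Y_{j-1}$. The paper bypasses this entirely via the Moving Window Lemma (\cref{lem:MWLemma}): the algorithm fails to stop before $Y_i$ if and only if the \emph{single} maximum element $Y_J$ in $Y_i$'s window either has $\theta_J\le0$ or is itself beaten inside its own window. One then conditions on the value of $J$; since the window of $Y_i$ always contains one side of the $k$-th card, $J\le k$, and for each fixed $J=j$ the computation involves only $\theta_1,\dots,\theta_j$. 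This turns $\PP(\ALG(t)=Y_i,\,J=j)$ into an elementary integral (four cases according to whether $\theta_i\le t$ or $>t$ and to the sign/range of $\theta_j$), yielding the closed form $a(j,t)/2^j$ of \cref{lem:dist_algt} directly --- no recursion or generating function is needed.

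Second, your guess about the binding constraint is wrong and would derail the optimization. The limiting stopping probability $R_\infty(t)=\tfrac{t}{2}+\ln\frac{2}{1+t}$ is monotone \emph{decreasing} on $[0,1]$, so if the infimum over $m$ sat only at $m\to\infty$, your first-order condition would push $t$ to $0$ (giving $\ln 2$, which is then invalidated by the $m=1$ constraint), not to $0.1129$. What actually happens is that for small $t$ the infimum over $m$ is attained at $m=1$, where $R_\infty(1,t)$ is increasing in $t$; the optimal $t^*$ is the crossing point $R_\infty(1,t^*)=R_\infty(t^*)$, i.e., the unique root of $2-\tfrac{5t}{2}-(3+t)\ln2+(4+t)\ln(1+t)$. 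After locating $t^*$ one must still verify $R_\infty(i,t^*)\ge R_\infty(t^*)$ for all $i\ge2$; the paper handles $i=2,3$ numerically and, for $i\ge4$, proves $a(j,t^*)\ge a(j+1,t^*)$ (\cref{lem:algt}), which forces $P_i\ge 2P_{i+1}$ and hence $R_\infty(i,t^*)\ge R_\infty(i+1,t^*)\to R_\infty(t^*)$.
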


To wrap up this section, we stablish two basic lemmas that will be useful later on. The first is used to estimate the probability of selecting any element arriving before a given element $e_s$. It is clear to see that all our algorithms satisfy the conditions.

\begin{lem}[Moving window lemma]\label{lem:MWLemma}
Suppose the windows' left extremes $(\ell_s)_{s\in \{n+1,\dots 2n\}}$ specified by a moving window algorithm satisfy 
\begin{align}\label{left}
\ell_{n+1}\leq \ell_{n+2}\leq \dots \leq \ell_{2n}\leq n+1
\end{align}
Let $s$ be an index with $n+1\leq s$ and let $e_M=\max \{e_r\colon n+1\leq r\leq s\}$ be the maximum value arriving between positions $n+1$ and $s$.
The algorithm selects some element in the set $\{e_r\colon n+1\leq r\leq s\}$ if and only if $e_M$ is larger than every element in its window, i.e. $e_M>\max W_M$. In particular, if the window $W_s$ of $s$ is nonempty and  $e_J=\max W_s$ then the algorithm does not stop strictly before $e_s$ if and only if $J\leq n$ or ($J\geq n+1$ and $e_J<\max W_J$).
\end{lem}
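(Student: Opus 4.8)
The plan is to prove the equivalence by tracking the (random) step $\tau$ at which the algorithm stops --- with the convention $\tau=\infty$ if it never stops --- and comparing it with the index $M$ at which $e_M=\max\{e_r\colon n+1\le r\le s\}$ is attained; note $M\in\{n+1,\dots,s\}$, and since the $2n$ values are distinct, $M$ is unique and every inequality below is strict. The event ``the algorithm selects an element of $\{e_r\colon n+1\le r\le s\}$'' is simply ``$\tau\le s$'', because the first $n$ steps are always skipped, so $\tau\ge n+1$ whenever $\tau$ is finite. I would split the argument into two observations: (i) $\tau\le s$ if and only if $\tau\le M$; and (ii) $\tau\le M$ if and only if $e_M>\max W_M$. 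Chaining these two equivalences yields the main statement.

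For observation (i), the non-trivial implication is that $\tau\le s$ forces $\tau\le M$. Suppose toward a contradiction that $\tau>M$. The hypothesis $\ell_{n+1}\le\cdots\le\ell_{2n}\le n+1$ gives $\ell_\tau\le n+1\le M\le\tau-1$, so the index $M$ lies in $\{\ell_\tau,\dots,\tau-1\}$, i.e. $e_M\in W_\tau$; but the algorithm stops at $\tau$, which requires $e_\tau>\max W_\tau\ge e_M$, contradicting $e_\tau\le e_M$ (since $n+1\le\tau\le s$ and $e_M$ is the maximum over that range). For observation (ii): if $e_M>\max W_M$, then at step $M$ the algorithm either has already stopped or selects $e_M$, so in any case $\tau\le M$. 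Conversely, assume $\tau\le M$; I would verify that no element of $W_M=\{e_r\colon \ell_M\le r\le M-1\}$ exceeds $e_M$. If $r\ge n+1$ then $r\le M-1<M\le s$, so $e_r<e_M$ by maximality. If $r\le n$, the key point is that this facing-up element also lies in the window of the element actually selected: monotonicity gives $\ell_\tau\le\ell_M\le r$ (using $\tau\le M$) and trivially $r\le n\le\tau-1$, so $e_r\in W_\tau$; hence $e_r<e_\tau\le e_M$ by the selection rule at $\tau$ together with $e_\tau\le e_M$. In all cases $\max W_M<e_M$, which proves (ii).

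The ``in particular'' clause would follow by applying the main equivalence with $s$ replaced by $s-1$ (when $s\ge n+2$): ``the algorithm does not stop strictly before position $s$'' is the negation of ``it selects an element of $\{e_r\colon n+1\le r\le s-1\}$'', hence equivalent to $e_{M'}<\max W_{M'}$, where $e_{M'}=\max\{e_r\colon n+1\le r\le s-1\}$. It then remains to relate $M'$ to the index $J$ with $e_J=\max W_s$. Since $\ell_s\le n+1$, every index $r$ with $n+1\le r\le s-1$ already belongs to $\{\ell_s,\dots,s-1\}$, so $e_r\in W_s$; in particular $e_J\ge e_{M'}$. If $J\ge n+1$, then $e_J$ is itself one of the elements $\{e_r\colon n+1\le r\le s-1\}$, which together with $e_J\ge e_{M'}$ forces $J=M'$, giving exactly the condition $e_J<\max W_J$. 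If $J\le n$, then $e_J>e_{M'}$ and, by monotonicity, $\ell_{M'}\le\ell_s\le J\le n\le M'-1$, so $e_J\in W_{M'}$ and therefore $e_{M'}<\max W_{M'}$ holds automatically --- matching the disjunct ``$J\le n$''. The boundary case $s=n+1$ is immediate, since then $\tau\ge n+1$ always and $W_{n+1}$ contains only facing-up elements.

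I expect the crux to be observation (i) together with the facing-up case of (ii): a priori nothing prevents a large facing-up value from sitting inside $W_M$ while the algorithm nonetheless stops somewhere in $\{n+1,\dots,s\}$, which would break the equivalence; ruling this out is exactly what forces us to use the hypothesis that the window left extremes are non-decreasing and capped at $n+1$, guaranteeing that any such element was also present in the window at the stopping step. The remainder is routine bookkeeping: the empty-window convention (trivial selection), strictness of the inequalities from distinctness of the $2n$ values, and the degenerate case $s=n+1$.
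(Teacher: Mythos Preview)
Your proof is correct and follows essentially the same approach as the paper's. Both arguments first show that if the algorithm stops at some $\tau\le s$ then necessarily $\tau\le M$ (using that $e_M$ would otherwise sit in $W_\tau$), and then use monotonicity of the left extremes to push any facing-up element of $W_M$ back into $W_\tau$; your decomposition into the two observations (i) and (ii) is a slightly cleaner packaging of the same idea, and your treatment of the ``in particular'' clause---reducing to the main statement with $s-1$ and then identifying $M'$ with $J$ when $J\ge n+1$---mirrors the paper's argument almost verbatim.
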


\begin{proof}
Suppose the algorithm selects some $e_r$ with $n+1\le r\le s$. Note that $e_r$ cannot arrive after $e_M$ because in that case $e_M>e_r$ would be in the window $W_r$ of $e_r$  contradicting that $e_r$ was selected. Then $e_r$ must be $e_M$ or an element arriving before $e_M$. In any case, by condition \eqref{left}, $\ell_r\leq \ell_M$.  Since $e_r$ is selected, $e_r\geq\max\{e_t\colon \ell_r\leq t\leq r-1\}\geq \max\{e_t\colon \ell_M\leq t\leq n\}$. But then, as $e_M$ is the largest element arriving between $n+1$ and $s$, we also conclude that $e_M> \max \{e_t\colon \ell_M\leq t\leq M-1\}=\max W_M$. For the converse suppose that $e_M>\max W_M$ then the algorithm will pick $e_M$ unless it has already selected another element $e_r$ arriving before $e_M$. In any case, the algorithm stops by selecting some element $e_r$, with $n+1\leq r\leq s$.

Let us show the second statement of the lemma. Suppose that the window of $s$ is nonempty and let $J$ be the index of the maximum element in its window. Let also, $M'$ be the largest element arriving between steps $n+1$ and $s-1$. Since $e_{M'}$ is also in the window of $s$, we conclude that either $J\geq n+1$ and $M'=J$, or $J\geq n$ and $e_J>e_{M'}$. Suppose $J\geq n+1$. By the first part of the lemma, not stopping strictly before $e_s$ is equivalent to $e_{M'}<\max W_{M'}$, and since $M'=J$, $e_{J}<\max W_J$. If on the other hand $J\leq n$ then $e_J>e_{M'}$ and by assumption \eqref{left}, $\ell_{M'}\leq \ell_s\leq J\leq n+1\leq M'\leq s-1$. We conclude that $e_J$ is also in the window of $M'$ and therefore $\max W_{M'}\geq e_J>e_{M'}$, which is equivalent to not stopping strictly before $e_s$.\end{proof}

As a final ingredient for our results we need the following lemma about independent uniform random variables in $(-1,1]$.

\begin{lem}\label{lem:independent}
Let $X_0,X_1,\dots, X_m$ be i.i.d.~random variables distributed uniformly in $(-1,1]$. Define the event $E:$ $X_0>0  \text{ and, for all $i\in \{1,\dots, m\}$}, X_i\in [X_0-1,X_0)$.
Conditioned on event $E$,  $\{1-X_0,X_1+1-X_0,X_2+1-X_0,\dots, X_m+1-X_0\}$ is a family of $m+1$ i.i.d.~random variables distributed uniformly in $[0,1)$.
\end{lem}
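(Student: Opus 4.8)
The plan is to condition on the value of $X_0$ and exploit a shift-invariance: once $X_0$ is fixed, the conditional law of the shifted variables $(X_1+1-X_0,\dots,X_m+1-X_0)$ turns out not to depend on the value of $X_0$, which forces this tuple to be independent of $X_0$; a one-line Bayes computation then identifies the conditional law of $X_0$ itself, and the pieces combine immediately.

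First I would fix $x_0\in(0,1]$ and work conditionally on $\{X_0=x_0\}$. The interval $[x_0-1,x_0)$ has length $1$ and is contained in $(-1,1]$ (which has length $2$), so the events $A_i:=\{X_i\in[x_0-1,x_0)\}$, $i=1,\dots,m$, are independent, each of probability $1/2$, and one has $E\cap\{X_0=x_0\}=\{X_0=x_0\}\cap\bigcap_{i=1}^m A_i$. Moreover, conditioned on $A_i$ the variable $X_i$ is uniform on $[x_0-1,x_0)$, hence $X_i+1-x_0$ is uniform on $[0,1)$, and these stay independent across $i$ (since the $A_i$ involve disjoint coordinates). Therefore, conditionally on $E\cap\{X_0=x_0\}$, the tuple $(X_1+1-x_0,\dots,X_m+1-x_0)$ is a family of $m$ i.i.d.\ $\mathrm{Unif}[0,1)$ random variables, and crucially this conditional distribution is the same for every $x_0\in(0,1]$. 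Consequently, conditioned on $E$ alone, $(X_1+1-X_0,\dots,X_m+1-X_0)$ is i.i.d.\ $\mathrm{Unif}[0,1)$ and independent of $X_0$.

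Next I would pin down the conditional law of $X_0$ given $E$. From the previous step, $\PP(E\mid X_0=x_0)=2^{-m}$ for every $x_0\in(0,1]$ and $\PP(E\mid X_0=x_0)=0$ for $x_0\le 0$; since $X_0$ has constant density on $(-1,1]$, Bayes' rule gives that the conditional density of $X_0$ given $E$ is constant on $(0,1]$, i.e.\ $X_0\mid E\sim\mathrm{Unif}(0,1]$, so $1-X_0\mid E\sim\mathrm{Unif}[0,1)$. Combining this with the independence from the previous paragraph, conditioned on $E$ the full family $\{1-X_0,\,X_1+1-X_0,\dots,X_m+1-X_0\}$ consists of $m+1$ i.i.d.\ $\mathrm{Unif}[0,1)$ random variables, as claimed.

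There is essentially no hard step: the only genuine idea is the shift-invariance observation that the conditional law of the shifted variables does not depend on the value of $x_0$, and the rest is bookkeeping with half-open intervals together with an elementary Bayes computation; the mild measure-zero subtlety in conditioning on $\{X_0=x_0\}$ is handled in the standard way via disintegration. As an alternative, the lemma can be proved in one shot: conditioned on $E$ the vector $(X_0,\dots,X_m)$ is uniform on the region $\{x_0\in(0,1],\ x_i\in[x_0-1,x_0)\ \forall i\}$ (the unconditional density $2^{-(m+1)}$ is constant there and $\PP(E)=2^{-(m+1)}$), and the affine map $(x_0,x_1,\dots,x_m)\mapsto(1-x_0,\,x_1+1-x_0,\dots,x_m+1-x_0)$ has Jacobian of absolute value $1$ and carries this region bijectively onto the cube $[0,1)^{m+1}$, so the image is uniform on the cube.
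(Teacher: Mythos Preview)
Your proof is correct and follows essentially the same approach as the paper's: condition on $X_0=a$, observe that the shifted variables are i.i.d.\ uniform on $[0,1)$ independently of $a$, and then identify the conditional law of $X_0$ given $E$. Your write-up is in fact more careful than the paper's (you make the shift-invariance step explicit, which is what justifies independence from $X_0$), and the alternative Jacobian argument you sketch at the end is a clean bonus.
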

\begin{proof}
Conditioned on $E$ and on the realization $a\in (0,1]$ of the variable $X_0$, the family $\{X_i+1-a\}_{i\in \{1,\dots, m\}}$ is mutually independent. Since this is true for every realization $a$ of $X_0$, we can uncondition on $a$ and deduce that the family $\{1-X_0,X_1+1-X_0,X_2+1-X_0,\dots, X_m+1-X_0\}$ is mutually independent only conditioned on $E$.  Furthermore, since under $E$, $X_0$ distributes uniformly in $(0,1]$ we conclude that $1-X_0$ is uniform in $[0,1)$. Finally,
under $E$, for $i>0$, $X_i$ distributes uniformly in $[X_0-1,X_0)$, and therefore, $X_i+1-X_0$ distributes uniformly in $[0,1)$.
\end{proof}

\section{Maximizing the probability of picking the maximum}

In this section we present a lower bound for the secretary variant of the two-sided game of Googol. Recall that the objective is to maximize the probability of stopping at the maximum value that landed facing down. If we were maximizing the probability of stopping the overall maximum value (the classical secretary problem), the best strategy \cite{L61, D63} would be to skip a constant number of elements (roughly $1/e$) and then select the first element larger than all previous ones. Since we cannot select any number from the first half, we would be tempted to try the full window algorithm $\ALG_f$ that selects the first element in the second half that is larger than all previous ones. Unfortunately, this algorithm does not select any element if the top one, $Y_1$, appears in the first half. It turns out that the best algorithm among the three basic algorithms presented is the closed moving window algorithm $\ALG_c$ that we analyze below.

\SecretaryThm*

To prove Theorem \ref{thm:main_secretary} the interpretation using random arrival times over $(-1,1]$ will be of use. The idea of the proof is to compute, for a fixed value $k$, the probability of winning, that is, selecting the largest number in $\DD$. To compute this, we partition the event of winning  by identifying which value $Y_i$ is picked and which element $Y_j$ is the largest inside its window. We find out that the worst case is when $k \xrightarrow{} \infty$ and conclude the bound.

Define the events $E_1$: $Y_j$ is the largest element inside $Y_i$'s window, $E_2$: $Y_i=\max\DD$, $E_3$: the algorithm selects $Y_i$. Define $Q_{ij}$ as the event of $E_1$, $E_2$ and $E_3$ happening simultaneously. The following lemma will be of use to compute the probability of winning.

\begin{lem}
\label{lem:Q_ij}

For $1\leq i \leq k-1$,

\[
\PP(Q_{ij})=
\begin{cases}
\frac{1}{2^j} \frac{1}{i} \left(1- \frac{1}{j} \right) &  i+1\leq j \leq k-1,\\
\frac{1}{2^{k-1}} \frac{1}{i} & j=k,\\
0 & \text{otherwise.}
\end{cases}
\]

\end{lem}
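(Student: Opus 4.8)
The plan is to work throughout in the random-arrival-times model, fixing the instance (hence the indices $k$ and $k'$), so that $\theta_1,\dots,\theta_{k-1}$ are i.i.d.\ uniform on $(-1,1]$ (the values $Y_1,\dots,Y_{k-1}$ lie on distinct cards) while $\theta_k=C(\theta_{k'})$ is determined by $\theta_{k'}$. Recall that for $\ALG_c$ the window of the element arriving at time $\theta>0$ is the length-one interval $[\theta-1,\theta)$, and that hidden values are scanned in increasing order of arrival time. First I would rewrite $Q_{ij}=E_1\cap E_2\cap E_3$ using Lemma~\ref{lem:MWLemma}: conditioned on $E_1$ (so $W_i\neq\emptyset$ and $\max W_i=Y_j$) and $E_2$ (so $\theta_\ell\le 0$ for all $\ell<i$), the event $E_3$ that $\ALG_c$ selects $Y_i$ is equivalent to $Y_i>\max W_i$ --- that is, $i<j$ --- together with ``the algorithm does not stop strictly before $Y_i$''; by the last sentence of Lemma~\ref{lem:MWLemma}, applied with $s$ the position of $Y_i$ and $e_J=\max W_i=Y_j$, the latter holds iff $Y_j$ faces up, or $Y_j$ faces down and $Y_j<\max W_j$. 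In particular $\PP(Q_{ij})=0$ whenever $j\le i$, which (since the whole analysis is restricted to $Y_1,\dots,Y_k$) settles the ``otherwise'' case.

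For $i+1\le j\le k-1$, I would condition on $\theta_i=x\in(0,1]$ and partition $(-1,1]$ into the four arcs $R_1=(-1,x-1)$ (faces up, outside $W_i$), $R_2=[x-1,0]$ (faces up, inside $W_i$), $R_3=(0,x)$ (faces down, inside $W_i$, scanned before $Y_i$) and $R_4=[x,1]$ (faces down, outside $W_i$). Since $Y_\ell>Y_j$ for $\ell<i$, combining $E_2$ with $\max W_i=Y_j$ forces $\theta_\ell\in R_1$ for $\ell<i$; likewise $\max W_i=Y_j$ forces $\theta_m\in R_1\cup R_4$ for $i<m<j$ and $\theta_j\in R_2\cup R_3$. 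If $\theta_j\in R_2$ then $E_3$ is automatic. If $\theta_j=y\in R_3$, then a value can block $Y_j$ inside $W_j$ only if it exceeds $Y_j$ (hence has index $<j$) and lies in $W_j$; since $Y_i\notin W_j$, the potential blockers are exactly the $Y_m$ with $m<j$, $m\neq i$, and a short check shows each such value lies in $W_j$ iff $\theta_m\in[y-1,x-1)=W_j\setminus W_i$. So $E_3$ holds iff at least one such $\theta_m$ lands in $[y-1,x-1)$, an event I would treat by inclusion--exclusion. Summing the two sub-cases: the face-up sub-case contributes $\frac{1}{2^{j}}\cdot\frac{1}{i(i+1)}$, and the face-down sub-case contributes $\frac{1}{2^{j}}\bigl(\frac{1}{i+1}-I\bigr)$ with $I=\int_0^1\int_0^x y^{i-1}(1-x+y)^{j-1-i}\,dy\,dx$; the substitution $z=x-y$ and exchanging the order of integration give $I=\frac1i\int_0^1(1-z)^{j-1}\,dz=\frac1{ij}$, so the total is $\frac{1}{2^{j}}\bigl(\frac1i-\frac1{ij}\bigr)=\frac{1}{2^{j}}\cdot\frac1i\bigl(1-\frac1j\bigr)$, as claimed.

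For $j=k$ the same scheme applies, except that $\theta_k=C(\theta_{k'})$ is coupled to $\theta_{k'}$, so one distinguishes cases by the position of $k'$. When $k'=i$ one has $\theta_k=x-1\in R_2$, so $Y_k$ automatically lies in $W_i$ and faces up, $E_3$ is automatic, the surviving constraints are just $\theta_\ell\in R_1$ for $\ell<i$ and $\theta_m\in R_1\cup R_4$ for $i<m<k$, and the integral collapses immediately to $\frac{1}{2^{k-1}}\cdot\frac1i$. When $k'<i$ or $k'>i$ one uses that exactly one of $\theta_k,\theta_{k'}$ lies in $W_i$ (these arrival times differ by $1$ modulo $2$ while $|W_i|=1$), which again reduces the computation to a two-piece integral of the same shape as above and evaluates to the same value; so the answer does not depend on where $k'$ lies.

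The step I expect to be the main obstacle is the bookkeeping in the $\theta_j\in R_3$ sub-case (and its analogue for $j=k$): one has to pin down exactly which values can block $Y_j$ inside $W_j$, track the correct arc for each index, and then check that the resulting nested integral telescopes to the stated closed form --- and, for $j=k$, confirm that all the cases for $k'$ yield the same answer.
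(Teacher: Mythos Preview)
Your proposal is correct, and it reaches the same formulas as the paper, but by a genuinely different route.

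The paper avoids all explicit integration. Its key move is to rewrite $E_1$ (for $i+1\le j\le k-1$) as the event that the $j-1$ variables $C(\theta_1),\dots,C(\theta_{i-1}),C(\theta_{i+1}),\dots,C(\theta_{j-1}),\theta_j$ all fall in $[\theta_i-1,\theta_i)$, and then invoke \cref{lem:independent}: conditioned on $E_1$ and $\theta_i>0$, the shifted variables become i.i.d.\ uniform on $[0,1)$. The events $E_2$ and $E_3$ (given $E_1$) are then recast as two symmetric ``ranking'' events $F_1$ (that $0$ is the maximum among $i$ exchangeable uniforms) and $F_2$ (that $\theta_j$ is not the maximum among $j$ exchangeable uniforms), which are independent with probabilities $1/i$ and $1-1/j$. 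This yields $\PP(Q_{ij})=\tfrac12\cdot\tfrac{1}{2^{j-1}}\cdot\tfrac1i\cdot(1-\tfrac1j)$ in one line, with no integrals at all.

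Your approach instead conditions on $\theta_i=x$, partitions into the arcs $R_1,\dots,R_4$, and integrates. This is more laborious but entirely sound, and your evaluation of $I=\int_0^1\int_0^x y^{i-1}(1-x+y)^{j-1-i}\,dy\,dx=\tfrac{1}{ij}$ via $z=x-y$ and Fubini is correct. What your method buys is transparency: one sees exactly which geometric constraints contribute which factors. What the paper's method buys is brevity and a cleaner explanation of \emph{why} the answer has the product form $\tfrac{1}{2^j}\cdot\tfrac1i\cdot(1-\tfrac1j)$ --- it is literally a product of three independent events.

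One small remark on your $j=k$, $k'\neq i$ case: you anticipate a ``two-piece integral of the same shape'', but in fact $E_3$ is automatic there as well. If $\theta_k>0$ then $\theta_{k'}=\theta_k-1$ is the left endpoint of the closed window $W_k=[\theta_k-1,\theta_k)$, so $Y_{k'}>Y_k$ always blocks $Y_k$. Hence both sub-cases ($\theta_k\in R_2$ and $\theta_k\in R_3$) collapse to a single integral $\int_0^1\tfrac12(x/2)^{i-1}(1/2)^{k-1-i}\,dx=\tfrac{1}{2^{k-1}i}$, matching the $k'=i$ case immediately. The paper makes the same observation (``$Y_{k'}>Y_k$ and $\theta_{k'}\in[\theta_j-1,\theta_j]$'') to dispatch $j=k$ without any case split on the sign of $\theta_k$.
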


\begin{proof}

If $j < i$ and $E_1$ holds,  $Y_i$ will not be selected, therefore $\PP(Q_{ij}|j<i)=0$. Also, note that a window of length 1 will contain either $Y_k$ or $Y_{k'}$, where $k'$ is the couple of $k$, so if $j>k$, $E_1$ can not hold and $\PP(Q_{ij}|j>k)=0$. So assume that $i+1\leq j\leq k$. 

Observe now that $E_1$ is equivalent to the event that  $C(\theta_1), \dots, C(\theta_{i-1})$, $C(\theta_{i+1}),\dots,C(\theta_{j-1}), \theta_j$ are inside the interval $[\theta_i-1,\theta_i)$. As this interval has length $1$ and all variables are independent and uniformly chosen from $[-1,1)$,  $\PP(E_1\,|\,\theta_i>0,j<k)=1/2^{j-1}$. The case $j=k$ is slightly different, because $\theta_k=C(\theta_{k'})$. If $i=k'$ then $\theta_k=\theta_i-1\in [\theta_i-1,\theta_i)$, and if $i\neq k'$ then $\theta_k=C(\theta_l)$ for some $l\in \{1,\dots, j-1\}\setminus \{i\}$. In both cases, we need to impose only $j-2$ variables in $[\theta_i-1,\theta_i)$, resulting in $\PP(E_1\,|\,\theta_i>0,j=k)= 1/2^{k-2}$.

Define the auxiliary events $F_1: \max \{0,C(\theta_1),\dots, C(\theta_{i-1})\}=0$ and $F_2: \max \{0,C(\theta_1),\dots, $ $ C(\theta_{i-1}),C(\theta_{i+1}),\dots, C(\theta_{j-1}), \theta_j)\}\neq \theta_j$. Note that given $E_1$, event  $E_2$ is equivalent to $\theta_1,\dots, \theta_{i-1}\leq 0<\theta_i$, which in turn is equivalent to $\theta_i>0$ and $F_1$. From here $E_1$ and $E_2$ happening simultaneously is equivalent to $E_1, F_1$ and $\theta_i>0$.  Also, by Lemma \ref{lem:MWLemma}, $E_3$ is equivalent to either $\theta_j\leq 0$, or $\theta_j> 0$ and $Y_j$ is smaller than some element arriving in $[\theta_j-1,\theta_j)$.

We claim first that for $j=k$, $Q_{ik}$ is equivalent to $E_1$, $F_1$ and $\theta_i>0$ (which is equivalent to $E_1, E_2$). Indeed, suppose that $E_1, E_2$ hold. If $j=k$ we distinguish 2 cases. If $i\sim k$, then we have that $\theta_k\leq 0$. If $i\not\sim k$ and $\theta_k>0$ then the couple $k'$ of $k$, will satisfy that $Y_{k'}>Y_k$ and $\theta_{k'} \in [\theta_j-1,\theta_j]$. In both cases, $E_3$ holds.

Now we claim that for $1\leq j\leq k-1$, $Q_{ij}$ is equivalent to  all events $E_1, F_1, F_2$ and $\theta_i>0$ happening simultaneously. For that, suppose that $E_1$, and $E_2$ hold. Under that conditioning, $E_3$ holds in two cases, either $\theta_j\leq 0$, or $\theta_j>0$ and for some $\ell\in \{1,\dots, j-1\}\setminus\{i\}$, $\theta_j-1\leq \theta_\ell<\theta_j$, the latter being the same as $C(\theta_\ell)\in [-1,\theta_j-1) \cup [\theta_j,1]$. But recall that by $E_1$, every $\ell \in \{1,\dots, j-1\}\setminus\{i\}$ satisfies $C(\theta_\ell)\in [\theta_i-1,\theta_i)$. Therefore, $E_3$ holds if and only if $\theta_j\leq 0$ or for some $\ell\in \{1,\dots, j-1\}\setminus \{i\}$, $C(\ell)\in [\theta_j,\theta_i)$. This is equivalent to event $F_2$, concluding the claim.

We are ready to compute the probability of $Q_{ij}$. Note that $\theta_j$ being or not the maximum of the set $\{0,C(\theta_1),\dots, C(\theta_{i-1}),C(\theta_{i+1}),\dots, C(\theta_{j-1}),\theta_j\}$ does not depend on the inner ordering of the set $\{0,C(\theta_1),\dots, C(\theta_{i-1})\}=0$. This implies that $F_1$ and $F_2$ are independent events. Since  $F_1$ is equivalent to $\max \{1-\theta_i,C(\theta_1)+1-\theta_i,\dots, C(\theta_{i-1})+1-\theta_i\}=1-\theta_i$, from Lemma \ref{lem:independent}, $\PP(F_1|E_1, \theta_i>0)=1/i$, and since $F_2$ is equivalent to 
$\max\{1-\theta_i,C(\theta_1)+1-\theta_i,\dots, C(\theta_{i-1})+1-\theta_i, C(\theta_{i+1})+1-\theta_i,\dots, C(\theta_{j-1})+1-\theta_i,\theta_j+1-\theta_i\}\neq \theta_j+1-\theta_i$, $\PP(F_2|E_1, \theta_i>0)=1-1/j$. Putting all together and using independence, $\PP(Q_{ik})=\PP(E_1,\theta_i>0,F_1)=\frac12 \frac{1}{2^{k-2}}\frac1i$, and for $1\leq j\leq k-1$
 $\PP(Q_{ij})=\PP(E_1,\theta_i>0,F_1,F_2)=\frac12 \frac{1}{2^{j-1}}\frac1i\left(1-\frac1j\right).$ This concludes the proof of the lemma.\end{proof}

With Lemma \ref{lem:Q_ij}, we can compute the probability of winning as:
\begin{align*}
    \PP(Win) & = \sum_{i=1}^{k-1} \sum_{j=i+1}^{k} \PP(Q_{ij})
     = \sum_{i=1}^{k-1}\left( \sum_{j=i+1}^{k-1} \frac{1}{2^j}\frac{1}{i}\left(1-\frac{1}{j} \right) + \frac{1}{2^{k-1}}\frac{1}{i} \right)
     =\frac{H_{k-1}}{2^{k-1}} + \sum_{i=1}^{k-2} \sum_{j=i+1}^{k-1} \frac{1}{2^j}\frac{1}{i}\frac{j-1}{j},
\end{align*}
where $H_{s} = \sum_{i=1}^s \frac{1}{i}$ is the $s$-th harmonic number. Note that the probability of winning only depends on $k$ and denote it $F(k)$. The following lemma helps to find the worst case scenario for our algorithm.
\begin{lem}
\label{lem:F_decreasing}
$F(k)$ is non-increasing in $k$.
\end{lem}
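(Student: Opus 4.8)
The plan is to show that $F(k+1) \le F(k)$ for every $k \ge 2$ by writing $F(k)$ in a form whose dependence on $k$ is transparent and then bounding the difference $F(k+1) - F(k)$ directly. Recall
\[
F(k) = \frac{H_{k-1}}{2^{k-1}} + \sum_{i=1}^{k-2}\sum_{j=i+1}^{k-1} \frac{1}{2^j}\frac{1}{i}\frac{j-1}{j}.
\]
The double sum is a partial sum of a \emph{positive} series, so it is itself non-decreasing in $k$; hence all of the burden falls on showing that this increase is more than cancelled by the decrease of the term $H_{k-1}/2^{k-1}$. So the first step I would take is to swap the order of summation in the double sum, writing it as $\sum_{j=2}^{k-1} \frac{1}{2^j}\frac{j-1}{j}\sum_{i=1}^{j-1}\frac{1}{i} = \sum_{j=2}^{k-1} \frac{1}{2^j}\frac{j-1}{j}H_{j-1}$. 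This makes $F(k)$ a single sum plus the harmonic tail term, and the increment from $k$ to $k+1$ becomes a clean closed-form expression.

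The second step is to compute $F(k+1) - F(k)$ explicitly. From the single-sum form,
\[
F(k+1) - F(k) = \frac{H_k}{2^k} - \frac{H_{k-1}}{2^{k-1}} + \frac{1}{2^k}\cdot\frac{k-1}{k}\,H_{k-1}.
\]
Using $H_k = H_{k-1} + 1/k$, the first two terms combine to $\frac{1}{2^k}\bigl(\frac{1}{k} - H_{k-1}\bigr)$, so that
\[
F(k+1) - F(k) = \frac{1}{2^k}\left( \frac{1}{k} - H_{k-1} + \frac{k-1}{k}H_{k-1} \right) = \frac{1}{2^k}\left( \frac{1}{k} - \frac{H_{k-1}}{k} \right) = \frac{1 - H_{k-1}}{k\,2^k}.
\]
The final step is to observe that $H_{k-1} \ge 1$ for all $k \ge 2$ (with equality only at $k = 2$), so the numerator $1 - H_{k-1}$ is non-positive and therefore $F(k+1) \le F(k)$, which is exactly the claim. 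As a byproduct this also identifies the limiting worst case: $F(k) \to F(\infty) = \sum_{j\ge 2}\frac{1}{2^j}\frac{j-1}{j}H_{j-1}$, and one can check by summing this series (using $\sum_j x^j/j = -\ln(1-x)$ and its relatives at $x = 1/2$) that this equals $\ln 2\,(1 - \tfrac{\ln 2}{2})$, matching the constant in Theorem~\ref{thm:main_secretary}.

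I do not anticipate a genuine obstacle here — the whole argument is a telescoping computation once the summation order is exchanged — so the only thing to be careful about is bookkeeping at the small-$k$ boundary (checking that the index ranges in the double sum are non-empty for $k \ge 3$, and handling $k = 2$ where the double sum is empty and $F(2) = H_1/2 = 1/2$ directly). One could alternatively avoid the closed form entirely and argue more crudely that $\sum_{j=i+1}^{k}\mathbb{P}(Q_{ij})$ is non-increasing in $k$ for each fixed $i$ — the new $j = k+1$ term is $\frac{1}{2^k}\frac{1}{i}$ but it replaces the old $j = k$ term $\frac{1}{2^{k-1}}\frac{1}{i}$ with $\frac{1}{2^k}\frac{1}{i}(1 - \frac1k)$, a net change of $\frac{1}{2^k}\frac{1}{i}(1 + (1-\frac1k) - 2) = -\frac{1}{2^k}\frac{1}{ik} \le 0$ — which gives the monotonicity term-by-term and is perhaps the cleanest route of all.
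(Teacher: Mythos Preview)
Your main argument is correct and essentially identical to the paper's: both compute $F(k+1)-F(k)=\dfrac{1-H_{k-1}}{k\,2^k}\le 0$, the only cosmetic difference being that the paper writes the increment of the double sum directly as $\sum_{i=1}^{k-1}\frac{1}{2^k}\frac{1}{i}\bigl(1-\frac1k\bigr)=\frac{H_{k-1}}{2^k}\cdot\frac{k-1}{k}$ rather than first swapping the order of summation.

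One caveat on your alternative ``term-by-term'' route at the end: showing that $\sum_{j=i+1}^k\PP(Q_{ij})$ is non-increasing in $k$ for each \emph{fixed} $i\le k-1$ does not quite finish, because passing from $F(k)$ to $F(k+1)$ also introduces the new outer summand $i=k$, namely $\PP(Q_{k,k+1})=\frac{1}{2^k k}>0$. Summing your per-$i$ decreases gives $\sum_{i=1}^{k-1}\frac{1}{2^k ik}=\frac{H_{k-1}}{2^k k}$, which dominates this new term precisely because $H_{k-1}\ge 1$, recovering the same $\frac{1-H_{k-1}}{k\,2^k}$; so the fix is trivial, but the monotonicity is not literally term-by-term.
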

\begin{proof}
To prove the lemma we note that for $k\geq 2$:
\begin{align*}
F(k+1) - F(k)   & = \frac{H_{k}}{2^{k}}
   - \frac{H_{k-1}}{2^{k-1}}
   +  \sum_{i=1}^{k-1} \frac{1}{2^k}\frac{1}{i}\left(1-\frac{1}{k} \right)\\
                & =\frac{H_{k}}{2^{k}} + \frac{H_{k-1}}{2^k}\left(1 - \frac{1}{k} \right)  - \frac{H_{k-1}}{2^{k-1}} = \frac{1- H_{k-1}}{k2^k} \leq 0.\qedhere
\end{align*}

\end{proof}

\begin{proof}[Proof of \cref{thm:main_secretary}.]
To prove the main theorem of the section we use the previous results to give a lower bound for the probability of winning. The worst case happens $k\xrightarrow{} \infty$:
\begin{align*}
\PP(\text{Win})&\geq \min_{k\geq 2}F_k = \lim_{k\to \infty} F_k=\sum_{i=1}^{\infty}\frac{1}{i}\sum_{j=i+1}^\infty\frac{1}{2^j}\left(1-\frac{1}{j}\right)\\
&=\sum_{i=1}^\infty\frac{1}{i}\sum_{j=i+1}^\infty \left(\frac{1}{2^j} - \int_0^{1/2} x^{j-1}dx\right) =\sum_{i=1}^\infty\frac{1}{i} \left(\frac{1}{2^i}-\int_0^{1/2} \frac{x^i}{1-x}dx\right)\\
&=\sum_{i=1}^\infty \int_0^{1/2} x^{i-1}dx - \int_0^{1/2}\frac{1}{1-x} \int_{0}^x \sum_{i=1}^\infty y^{i-1}dydx\\
&=\int_0^{1/2}\frac{1}{1-x}dx - \int_0^{1/2}\frac{1}{1-x}\int_0^x\frac{1}{1-y}dydx\\
&=\ln 2\left(1-\frac{\ln 2}{2}\right)\approx 0.45292. \qedhere
\end{align*}

\end{proof}
\section{Maximizing the expected value}

In this section we present a lower bound for the prophet variant of the two-sided game of Googol, in which the objective is to maximize $\E(D_{\sigma(\tau)})/\E(\max \DD)$. We analyze a combination of the three basic algorithms presented in \cref{sec:basic_alg}. More precisely, for a fixed $r\in [0,1]$, we define $\ALG^*_r$ as follows. With probability $(1-r)/2$ run $\ALG_o$, with probability $(1-r)/2$ run $\ALG_c$, and with probability $r$ run $\ALG_f$. This combination allows us to obtain the main result of the section.

\ProphetThm*

This ratio is better than the best known guarantee of $1-1/e\approx 0.632$ for the i.i.d.~prophet secretary problem with samples~\cite{CDFS19}, which is
a particular case of our problem.

Recall that none of the 3 basic algorithms can guarantee a ratio better than $\tfrac{1}{2}$ by itself, even in very simple cases. Indeed, as shown in the introduction, for the instance $\{(1,0),(\varepsilon,\varepsilon^2)\}$ we have that $\E(\max \DD)=1/2+O(\varepsilon)$ and $\E(\ALGO)=1/4 + O(\varepsilon)$. Furthermore, if we consider the instance $\{(1,1-\varepsilon),(\varepsilon,0)\}$ then $\E(\max \DD)=1 - O(\varepsilon)$ but $\E(\ALGC)=\E(\ALGF)=1/2+O(\varepsilon)$.
 
The intuition of our result is that the situations where each of the three algorithms perform poorly are very different. As we are able to compute exactly the distribution of the outcome of each algorithm, we can balance their distributions, so that in all cases $ALG^*_r$ performs well. We can state this in terms of the
ordered values $Y_1,...,Y_k$. On the one hand, since \ALGF is very restrictive, it has good chances of stopping at $Y_1$ when it is in $\DD$. On the other hand, \ALGC has higher probability of stopping at $Y_i$ than \ALGF, for $1<i<k$. But none of the two algorithms can stop at $Y_k$, whereas \ALGO stops at
$Y_k$ with positive probability. This comes at the cost of sacrificing a bit of the better elements, but as showed in the examples, it is very important for the case when $k$ is small.

Roughly speaking, the proof of \cref{thm:main_prophet} goes as follows. For each of the basic algorithms we compute in \cref{lem:dist_algc,lem:dist_algo,lem:dist_algf}  the probability that they stop with the value $Y_j$ for each $j<k$, and express it as a series truncated in $k$ and an extra term that depends on $k$. Then in
\cref{lem:tail_series} we prove that in the combined formula, for certain values of $r$, the extra term can be replaced with the tail of the series, so the formula does not depend on $k$. Finally, for a specific value of $r$, we show in \cref{lem:geometric_distribution,lem:k_geometric_distribution} an approximate stochastic dominance and conclude, i.e., we make use of the following general observation.
\begin{obs}
If $\min_{j\leq k}\frac{\PP(\ALG \geq Y_j)}{\PP(\max \DD\geq Y_j)}\geq \alpha$, then $\E(\ALG)\geq \alpha \E(\max \DD)$.\footnote{Recall that $\max\DD$ can only take values in $\{Y_1,...,Y_k\}$, so the implication follows immediately.}
\end{obs}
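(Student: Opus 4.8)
The plan is to reduce the claimed inequality to a comparison of the two ``survival functions'' $t\mapsto \PP(\ALG\ge t)$ and $t\mapsto\PP(\max\DD\ge t)$ through the layer-cake (tail-sum) formula for the expectation of a nonnegative random variable. First I would set $Y_{k+1}:=0$ and recall from the preliminaries that $\max\DD$ takes values only in $\{Y_1,\dots,Y_k\}$, so its survival function is a step function that is constant on each half-open interval $(Y_{j+1},Y_j]$. Abel summation then yields the exact identity
\[
\E(\max\DD)=\int_0^{Y_1}\PP(\max\DD\ge t)\,dt=\sum_{j=1}^{k}(Y_j-Y_{j+1})\,\PP(\max\DD\ge Y_j),
\]
where every coefficient $Y_j-Y_{j+1}$ is nonnegative because the $Y_i$ are strictly decreasing and $Y_{k+1}=0<Y_k$.

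Next I would bound $\E(\ALG)$ from below in the same way. Since $\ALG\ge 0$, the layer-cake formula gives $\E(\ALG)=\int_0^\infty\PP(\ALG\ge t)\,dt\ge\int_0^{Y_1}\PP(\ALG\ge t)\,dt$, and since $t\mapsto\PP(\ALG\ge t)$ is nonincreasing it is at least $\PP(\ALG\ge Y_j)$ on the whole interval $(Y_{j+1},Y_j]$. Integrating interval by interval gives
\[
\E(\ALG)\ \ge\ \sum_{j=1}^{k}(Y_j-Y_{j+1})\,\PP(\ALG\ge Y_j).
\]
Now I would apply the hypothesis $\PP(\ALG\ge Y_j)\ge\alpha\,\PP(\max\DD\ge Y_j)$ (valid for each $j\le k$, since $\PP(\max\DD\ge Y_j)\ge\PP(\max\DD=Y_1)>0$ makes the ratios well defined) termwise; because the coefficients $Y_j-Y_{j+1}$ are nonnegative, the inequality is preserved under summation, and combining with the identity for $\E(\max\DD)$ gives $\E(\ALG)\ge\alpha\,\E(\max\DD)$.

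There is essentially no obstacle here; the only point requiring a moment's care is that $\ALG$ might select a value strictly below $Y_k$ (or nothing at all, worth $0$), so one only gets an inequality rather than an identity for $\E(\ALG)$ — but this is precisely the harmless direction, as we want a lower bound on $\E(\ALG)$. Equivalently, one can phrase the whole argument as: $\alpha$-scaled pointwise-in-every-threshold dominance implies dominance in expectation, which is the standard first-order stochastic dominance fact restricted to the finite support $\{Y_1,\dots,Y_k\}$ on which $\max\DD$ is concentrated.
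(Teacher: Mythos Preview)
Your argument is correct and is exactly the standard tail-sum justification the paper has in mind; the paper itself does not spell out a proof beyond the footnote remark that $\max\DD\in\{Y_1,\dots,Y_k\}$, so your Abel-summation computation simply fills in that one-line observation. The only superfluous caution is your allowance for $\ALG$ to take values below $Y_k$: for the specific moving-window algorithms analyzed in the paper this never happens, but your more general treatment does no harm.
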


\subsection{Distribution of the Maximum}

As noted before, $\PP(\max \DD = Y_i) = 0$ if $i>k$. For the remaining values, Wang \cite{W18} did a simple analysis for the distribution of $\max \DD$.

For $Y_i$ to be the maximum of $\DD$ we need: (1) $Y_i\in \DD$ and (2) $Y_j \in \UU$ for $1\leq j < i$. If $1\leq i < k$ all these events are independent with probability $\tfrac{1}{2}$, so we have $\PP(\max \DD =Y_i) = (\tfrac{1}{2})^i$. If $i=k$, the fact that $Y_k$ is facing downwards implies that its couple is facing upwards, so we need one less coin toss for the events to happen simultaneously. Putting all together:

\[\mathbb{P}(\max \DD=Y_i)= \begin{cases} 
      \frac{1}{2^i} & i<k \\
      \frac{1}{2^{k-1}} & i=k\\
      0 & i> k 
   \end{cases}
\]

Therefore we have that, $\PP(\max \DD\geq Y_i) = 1 - \tfrac{1}{2^i}$ if $1 \leq i < k$  and $\PP(\max \DD\geq Y_k) = 1$.


\subsection{Analysis of the basic algorithms}
In this section we precisely derive the distribution of the obtained value of each of the basic algorithms. In the next section we combine these distributions to find an improved randomized algorithm. In what follows, we will denote by $k'$ the couple of $k$. Note that the identities of $k$ and $k'$  depend only on the instance, and not on the realizations of the random coins or the random permutation.

\begin{lem}
\label{lem:dist_algc}
For every $1\leq i \leq k-1$,
\begin{align*}
    \PP(\ALGC=Y_i)&=\frac{1}{2^{k-1}} + \sum_{j=i+1}^{k-1}\frac{1}{2^j}\left(1-\frac{1}{j}\right)\\
\shortintertext{and }\qquad \PP(\ALGC\geq Y_k)&= \sum_{j=1}^{k-1}\frac{1}{2^j j}.
\end{align*}
\end{lem}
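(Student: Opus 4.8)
The plan is to mirror the structure of the secretary analysis (Lemma~\ref{lem:Q_ij}) but now track \emph{which value} is selected rather than only the event of picking the maximum. I would work entirely in the random-arrival-time model on $(-1,1]$, where $\ALGC$ corresponds to the threshold window $[\theta-1,\theta)$ for an element arriving at time $\theta$. Fix $1\le i\le k-1$ and decompose the event $\{\ALGC=Y_i\}$ according to $j$, the index of the largest element inside $Y_i$'s window (with the convention that $j$ ranges over $\{i+1,\dots,k\}$, since a window of length $1$ always contains $Y_k$ or its couple, so the window maximum is among $Y_1,\dots,Y_k$, and it cannot be some $Y_j$ with $j<i$ or $\ALGC$ would not stop at $Y_i$). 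The point is that $\{\ALGC=Y_i\}\cap\{\text{window-max is }Y_j\}$ is governed by exactly the same three sub-events as in the secretary proof --- $E_1$: $C(\theta_1),\dots,C(\theta_{i-1}),C(\theta_{i+1}),\dots,C(\theta_{j-1}),\theta_j\in[\theta_i-1,\theta_i)$; $E_3$: $\ALGC$ actually stops at $Y_i$, which by \cref{lem:MWLemma} means $\theta_j\le 0$ or $Y_j$ is beaten inside \emph{its} window --- but now \emph{without} the constraint $E_2$ that forces $Y_1,\dots,Y_{i-1}$ to be face up. So I expect $\PP(\ALGC=Y_i,\,\text{w-max}=Y_j)$ to equal the $\PP(Q_{ij})$ formula of \cref{lem:Q_ij} but with the factor $1/i$ (which came from event $F_1$) removed: namely $\frac{1}{2^j}\bigl(1-\frac1j\bigr)$ for $i+1\le j\le k-1$ and $\frac{1}{2^{k-1}}$ for $j=k$.

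Concretely I would re-run the conditioning argument: condition on $\theta_i>0$ and on $E_1$; apply \cref{lem:independent} to the family $\{1-\theta_i,\,C(\theta_\ell)+1-\theta_i\}$ to see that $\theta_j+1-\theta_i$ is uniform on $[0,1)$ and independent of the remaining shifted variables, so the analogue of $F_2$ (that $Y_j$ is not the window maximum of its own window, equivalently beaten by some $C(\theta_\ell)$ landing in $[\theta_j,\theta_i)$) has conditional probability $1-\frac1j$; and for $j=k$ use that $\theta_k=C(\theta_{k'})$ so that, regardless of whether $i\sim k$, the event $E_3$ is automatic given $E_1$ (one fewer independent coordinate needs to be placed, exactly as in \cref{lem:Q_ij}), giving conditional probability $1$ in place of $1-\frac1k$. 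Multiplying by $\PP(\theta_i>0)=\frac12$ and by $\PP(E_1\mid \theta_i>0)$, which is $1/2^{j-1}$ when $j<k$ and $1/2^{k-2}$ when $j=k$ (again identical to the secretary computation), yields the claimed $\frac{1}{2^j}(1-\frac1j)$ and $\frac{1}{2^{k-1}}$. Summing over $j$ from $i+1$ to $k$ then gives
\[
\PP(\ALGC=Y_i)=\frac{1}{2^{k-1}}+\sum_{j=i+1}^{k-1}\frac{1}{2^j}\Bigl(1-\frac1j\Bigr),
\]
the first displayed identity.

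For the second identity, I would obtain $\PP(\ALGC\ge Y_k)$ --- the probability that $\ALGC$ stops at all, i.e.\ selects some $Y_j$ with $j\le k$ --- either by summing $\PP(\ALGC=Y_i)$ over $i=1,\dots,k-1$ and adding $\PP(\ALGC=Y_k)=0$ (since $\ALGC$ can never stop at $Y_k$: a window of length $\ge1$ contains $Y_k$'s couple, which is larger when $Y_k$ is face down... wait, $Y_k$'s couple $k'<k$ so $Y_{k'}>Y_k$ and it sits in the window, blocking $Y_k$), or, more cleanly, by a direct first-principles computation: $\ALGC$ fails to stop exactly when the overall window maximum phenomenon never triggers, and the probability it stops is $\sum$ over the "first" face-down value that beats its window. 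Either way, swapping the order of summation,
\[
\sum_{i=1}^{k-1}\Bigl(\frac{1}{2^{k-1}}+\sum_{j=i+1}^{k-1}\frac1{2^j}\bigl(1-\tfrac1j\bigr)\Bigr)
=\frac{k-1}{2^{k-1}}+\sum_{j=2}^{k-1}(j-1)\frac1{2^j}\frac{j-1}{j},
\]
and I would check by a short telescoping/induction that this collapses to $\sum_{j=1}^{k-1}\frac{1}{2^j j}$. The main obstacle I anticipate is not the probabilistic content --- that is a routine adaptation of \cref{lem:Q_ij} --- but the bookkeeping in the $j=k$ boundary case (the coupling $\theta_k=C(\theta_{k'})$ and the two sub-cases $i\sim k$ vs.\ $i\not\sim k$) and verifying the final algebraic identity for $\PP(\ALGC\ge Y_k)$; I would double-check the latter for small $k$ (say $k=2,3$) before trusting the closed form.
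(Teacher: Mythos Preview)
Your approach to the first identity is essentially identical to the paper's: decompose $\{\ALGC=Y_i\}$ according to the index $j$ of the largest value in $Y_i$'s window, compute $\PP(F_j)=1/2^j$ (resp.\ $1/2^{k-1}$ when $j=k$), and use \cref{lem:MWLemma} together with \cref{lem:independent} to get the conditional probability $1-1/j$ (resp.\ $1$). Your observation that this is exactly the secretary computation of \cref{lem:Q_ij} with the $F_1$-factor $1/i$ removed is a nice way to organize it.

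For the second identity the paper takes a different, cleaner route than your primary one: rather than summing the first formula over $i$ and collapsing the resulting double sum algebraically, it conditions on the event $E_j=\{Y_j=\max\DD\}$ (i.e.\ $j$ is the least index with $\theta_j>0$), notes $\PP(E_j)=1/2^j$ for $j<k$, and observes that conditional on $E_j$ the algorithm stops iff $\theta_j=\max\{C(\theta_1),\dots,C(\theta_{j-1}),\theta_j\}$, which has probability $1/j$; for $j=k$ the algorithm never stops. This gives $\sum_{j=1}^{k-1}\frac{1}{2^j j}$ in one line. Your summation approach is correct (the identity $\frac{k-1}{2^{k-1}}+\sum_{j=2}^{k-1}\frac{(j-1)^2}{2^j j}=\sum_{j=1}^{k-1}\frac{1}{2^j j}$ does hold, via $(j-1)^2/j=j-2+1/j$ and a short telescoping), but the paper's direct argument avoids this bookkeeping entirely---and in fact you gesture at it yourself with your ``more cleanly, by a direct first-principles computation'' remark.
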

\begin{proof}
Intuitively, for each $j>i$, we condition on that $Y_j$ is the maximum value in the window of $Y_i$ and show that \ALGC stops in $Y_i$ with probability $(1-1/j)$. More precisely, fix an index $i\leq k-1$. For $j\leq i, j\neq i$, denote by $F_j$ the event that $\theta_i\geq 0$ (or equivalently $Y_i\in \DD$) 
and that $j$ is the smallest index such that $\theta_j\in [\theta_i-1,\theta_i)$.
If $j<k$ then $\PP(F_j)=1/2^j$ because the variables $\{\theta_{\ell}\}_{\ell\leq j}$
are all independent and uniformly distributed in $(-1,1]$, and for any given $\theta_i\geq 0$, the interval $[\theta_i-1,\theta_i)$ has length $1$. For $F_k$ we have that
$\theta_k\in [\theta_i-1,\theta_i)$ if and only if $\theta_{k'}= C(\theta_k) \not\in [\theta_i-1,\theta_i)$. Since the variables $\{\theta_{\ell}\}_{\ell\leq k, \ell\neq k'}$
are all independent, $\PP(F_k)= 1/2^{k-1}$.

Note that if for some $j<i$, $\theta_j\in [\theta_i-1,\theta_i)$, then $\ALGC\neq Y_i$,
because its threshold $T_c(\theta_i)$ will be at least $Y_j>Y_i$. Also note that either $\theta_k$ or $\theta_{k'}$ is in $[\theta_i-1,\theta_i)$, so the events $(F_j)_{j=i+1}^k$ (that are pairwise disjoint) completely
cover the event $\{\ALGC=Y_i\}$. Thus we have the identity
\begin{align}
    \PP(\ALGC=Y_i) = \frac{1}{2^{k-1}}\PP(\ALGC=Y_i| F_k) + \sum_{j=i+1}^{k-1} 
    \frac{1}{2^{j}}\cdot  \PP(\ALGC=Y_i| F_j).
\end{align}
Now, for $j>i$, conditional on $F_j$, $\ALGC= Y_i$ if and only if $\ALGC$ does not stop
before observing $Y_i$. Using \cref{lem:MWLemma,lem:independent} we will show that  $\PP(\ALGC=Y_i|F_k)=1$ and that $\PP(\ALGC=Y_i|F_j)= (1-1/j)$ if
$i<j<k$, and conclude the first formula of the lemma.

Conditional on $F_k$, the maximum value in the window of $Y_i$ is $Y_k$, but
$Y_k$ is not larger than all elements in its window, because $Y_{k'}$ is in it (it is on the other side of its card). So \cref{lem:MWLemma} implies that $\PP(\ALGC=Y_i|F_k)=1$.

For the case $i<j<k$, we have that conditional on $F_j$, $Y_j$ is the maximum value in the window of $Y_i$, so \cref{lem:MWLemma} implies that \ALGC stops in
$(0,\theta_i)$ if and only if $\theta_j\geq 0$ and $Y_j$ is larger than all elements in its window, i.e., $\theta_{\ell}\not \in [\theta_j-1,\theta_j)$, for all $\ell<j$. This is equivalent to $\max \{0, C(\theta_1), ..., C(\theta_{i-1}),
C(\theta_{i+1}),..., C(\theta_{j-1}), \theta_j\} = \theta_j$. Now, \cref{lem:independent} implies that
conditional on $F_j$, the random variables
$\{0-(\theta_i-1), C(\theta_1) -(\theta_i-1) , ..., C(\theta_{i-1})-(\theta_i-1),
C(\theta_{i+1})-(\theta_i-1),..., C(\theta_{j-1})-(\theta_i-1), \theta_j-(\theta_i-1)\}$ are independent and uniformly distributed in $(0,1)$. Hence,
conditional on $F_j$, \ALGC stops in $(0,\theta_i)$ with probability $1/j$, so
$\PP(\ALGC=Y_i|F_j)=(1-1/j)$. This proves the first formula.

For computing $\PP(\ALGC\geq Y_k)$, which is the probability that \ALGC stops, define as $E_j$ the event that $j$ is the smallest index such that $\theta_j>0$ (or equivalently, that $Y_j=\max \DD$). It is clear that for $j<k$, $\PP(E_j)= 1/2^j$, that $\PP(E_k)=1/2^{k-1}$, and that $\PP(E_j)=0$ if $j>k$.
For $j<k$ , conditional on $E_j$, \cref{lem:MWLemma} implies that \ALGC stops if and only if $\max \{C(\theta_1), C(\theta_2),...,C(\theta_{j-1}),\theta_j\} =\theta_j$. But this happens with probability $1/j$. Conditional on $E_k$, \ALGC never stops. This concludes the proof of the lemma.
\end{proof}

\begin{lem}
\label{lem:dist_algo}
For every $1\leq i \leq k-1$,
\begin{align*}\PP(\ALGO=Y_i)&=\frac{1}{2^{k-1}}\left(1-\frac{\mathds{1}_{\{i\neq k'\}}}{k-1}\right) + \sum_{j=i+1}^{k-1}\frac{1}{2^j}\left(1-\frac{1}{j}\right)
\shortintertext{and}
\PP(\ALGO\geq Y_k) &= \frac{1}{2^{k-1}(k-1)} + \sum_{j=1}^{k-1} \frac{1}{2^j j}.
\end{align*}
\end{lem}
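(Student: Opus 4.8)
The plan is to follow the template of the proof of \cref{lem:dist_algc} almost verbatim, working with the random arrival times together with the Moving Window Lemma (\cref{lem:MWLemma}) and \cref{lem:independent}. The only structural change is that the window of \ALGO for an element arriving at time $\theta>0$ is the \emph{open} interval $(\theta-1,\theta)$ rather than $[\theta-1,\theta)$; equivalently, when $Y_i\in\DD$ (so $\theta_i>0$) the couple of $Y_i$, which arrives at exactly $\theta_i-1$, is now \emph{excluded} from $Y_i$'s window. As a preliminary remark, and exactly as for \ALGC, \ALGO never stops at a value $Y_m$ with $m>k$: the card of $Y_m$ differs from the card $\{Y_k,Y_{k'}\}$, so a.s.\ exactly one of $Y_k,Y_{k'}$ (both larger than $Y_m$) lies in $Y_m$'s open window and $Y_m$ cannot beat it; hence the whole analysis stays confined to $Y_1,\dots,Y_k$.

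Fix $i\le k-1$. For $j\neq i$ let $F_j$ be the event that $\theta_i>0$ and $j$ is the smallest index with $\theta_j\in(\theta_i-1,\theta_i)$, i.e.\ $Y_j$ is the largest value in $Y_i$'s window under \ALGO; as before $j<i$ forces $\ALGO\neq Y_i$, so $\{\ALGO=Y_i\}$ is the disjoint union of $\{\ALGO=Y_i\}\cap F_j$ over $j>i$. Two regimes appear. If $i\neq k'$, then $\theta_k$ and $\theta_{k'}$ are both independent of $\theta_i$ and differ by exactly $1$, so a.s.\ exactly one of them lies in $(\theta_i-1,\theta_i)$ and the window-max index is at most $k$; moreover $\PP(F_j)=1/2^j$ for $i<j<k$, and a short case analysis on the sign of $\theta_{k'}$ gives $\PP\big(\theta_{k'}\notin(\theta_i-1,\theta_i),\ C(\theta_{k'})\in(\theta_i-1,\theta_i)\mid\theta_i\big)=1/2$, hence $\PP(F_k)=1/2^{k-1}$. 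If $i=k'$, then $Y_k$ is the couple of $Y_i$, sits at the excluded endpoint $\theta_i-1$, so $F_k$ is null, while $F_j$ for $j>k$ can occur.

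The conditional selection probabilities follow from the second part of \cref{lem:MWLemma}: \ALGO does not stop strictly before $Y_i$ iff its window-max $Y_j$ lies in $\UU$, or $Y_j\in\DD$ and $Y_j$ does not beat its own window. For $i<j<k$ the couple of $Y_j$ is some $Y_{j^*}$ with $j^*\ge k+1$, hence smaller than $Y_j$, so the open/closed distinction is immaterial to ``$Y_j$ beats its window'', and the shift-by-$(1-\theta_i)$ argument with \cref{lem:independent} gives $\PP(\ALGO=Y_i\mid F_j)=1-1/j$, exactly as for \ALGC. For $j>k$ with $i=k'$, the condition $\theta_j\in(\theta_i-1,\theta_i)$ forces $\theta_i-1\in(\theta_j-1,\theta_j)$, so $Y_k$ (the couple of $Y_i$, with $Y_k>Y_j$) lies in $Y_j$'s window, $Y_j$ never beats it, and $\PP(\ALGO=Y_i\mid F_j)=1$. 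The one genuinely new case is $j=k$ with $i\neq k'$: here the window-max $Y_k$ has couple $Y_{k'}>Y_k$, which is exactly the element \ALGC kept in $Y_k$'s closed window and \ALGO now omits, so $Y_k$ can beat its own window; conditioning on $\theta_i$ and on $\theta_k$ (which is uniform on $(0,\theta_i)$ given $F_k$ and $Y_k\in\DD$, while the remaining $k-3$ top values fall outside $(\theta_k-1,\theta_k)$ independently with probability $1-(\theta_i-\theta_k)$ each), then integrating out $\theta_k$ and then $\theta_i\sim\mathrm{Unif}(0,1]$, one gets $\PP(Y_k\in\DD,\ Y_k\text{ beats its window}\mid F_k)=\int_0^1\frac{1-(1-s)^{k-2}}{k-2}\,ds=\frac1{k-1}$, so $\PP(\ALGO=Y_i\mid F_k)=1-\frac1{k-1}$. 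Summing up gives $\tfrac{1}{2^{k-1}}\big(1-\tfrac1{k-1}\big)+\sum_{j=i+1}^{k-1}\tfrac1{2^j}\big(1-\tfrac1j\big)$ when $i\neq k'$; when $i=k'$, the $j>k$ terms each carry weight $1$, so together they equal $\PP(\theta_i>0,\ \theta_\ell\notin(\theta_i-1,\theta_i)\text{ for all }\ell\le k-1,\ \ell\neq i)=\tfrac12\cdot 2^{-(k-2)}=2^{-(k-1)}$, giving $\tfrac{1}{2^{k-1}}+\sum_{j=i+1}^{k-1}\tfrac1{2^j}\big(1-\tfrac1j\big)$; both cases are precisely the stated formula.

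For $\PP(\ALGO\ge Y_k)$ we argue as for \ALGC. Since \ALGO never stops below $Y_k$, this equals $\PP(\ALGO\text{ stops})$, which we split over $E_j=\{Y_j=\max\DD\}$ (i.e.\ $j$ least with $\theta_j>0$): for $j<k$ the conditional stopping probability is $1/j$ by the exchangeability of $\{C(\theta_1),\dots,C(\theta_{j-1}),\theta_j\}$, exactly as for \ALGC; but $E_k$ now differs — \ALGC never stops there since $Y_{k'}>Y_k$ is in $Y_k$'s closed window, whereas \ALGO's open window omits it, so \ALGO stops (necessarily at $Y_k$, all other $\DD$ values under $E_k$ being either $Y_1,\dots,Y_{k-1}\in\UU$ or values $<Y_k$) exactly when the $k-2$ top cards other than $\{Y_{k'},Y_k\}$ all present their small side, which, after conditioning $\theta_k\mid E_k\sim\mathrm{Unif}(0,1]$, happens with probability $\int_0^1\theta^{k-2}\,d\theta=\frac1{k-1}$. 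Adding $\PP(E_k)\cdot\tfrac1{k-1}=\tfrac1{2^{k-1}(k-1)}$ to $\sum_{j=1}^{k-1}\tfrac1{2^jj}$ yields the claim. The part that requires real work rather than bookkeeping is the conditional computation in the case $j=k$, $i\neq k'$ (and its analogue under $E_k$): one must pin down exactly which elements of the instance can land in $Y_k$'s open window, carry out the resulting nested uniform integrals, and verify the exchangeability hypotheses of \cref{lem:independent}.
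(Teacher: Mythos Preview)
Your proof is correct and takes a genuinely different route from the paper. The paper exploits a \emph{coupling} between \ALGO and \ALGC: since the two algorithms differ only in whether the couple of the current card sits in the window, the sole value \ALGO can accept that \ALGC cannot is $Y_k$; hence for $i\le k-1$ one has $\PP(\ALGO=Y_i)=\PP(\ALGC=Y_i)-\PP(\ALGC=Y_i,\ \ALGO=Y_k)$, and the correction term is computed as $\PP(\ALGO=Y_k)\cdot\PP(\ALGC=Y_i\mid\ALGO=Y_k)=\tfrac{1}{2^{k-1}}\cdot\tfrac{\mathds{1}_{\{i\ne k'\}}}{k-1}$ via a single application of \cref{lem:independent}. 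The second formula then follows by summing and using $\PP(\ALGO\ge Y_k)=\PP(\ALGC\ge Y_k)+\tfrac{1}{2^{k-1}(k-1)}$. Your approach instead reruns the $F_j$-decomposition of \cref{lem:dist_algc} directly for the open window, isolating the one genuinely new conditional $\PP(\ALGO=Y_i\mid F_k)=1-\tfrac{1}{k-1}$, and treats $i=k'$ by pushing the window-max index past $k$. The coupling is shorter and reuses \cref{lem:dist_algc} wholesale; your approach is self-contained and makes transparent exactly where the $\tfrac{1}{k-1}$ discrepancy originates (namely, $Y_{k'}$ is now excluded from $Y_k$'s window, so $Y_k$ can beat it).

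Two small remarks. First, when you write ``for $i<j<k$ the couple of $Y_j$ is some $Y_{j^*}$ with $j^*\ge k+1$'', this fails if $j=k'$ (then $j^*=k$); your conclusion that the open/closed distinction is immaterial is still correct, since in every case $j^*\ge k>j$ and hence $Y_{j^*}<Y_j$. Second, your nested integral for $\PP(Y_k\in\DD,\ Y_k\text{ beats its window}\mid F_k)$ can be bypassed: under $F_k$ with $i\ne k'$, the event ``\ALGO stops strictly before $Y_i$'' is exactly ``$\theta_k=\max\{0,\ C(\theta_\ell):\ell\in[k-1]\setminus\{i,k'\},\ \theta_k\}$'', and by \cref{lem:independent} this is a family of $k-1$ exchangeable uniforms, giving probability $1/(k-1)$ directly---which is essentially how the paper handles its correction term.
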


\begin{proof}
Recall that \ALGO cannot select a value $Y_j$ with $j>k$. Suppose now that we run \ALGO and \ALGC in the same instance and realization. Since \ALGC is more restrictive than \ALGO, if \ALGO stops, it stops earlier than \ALGC. Nevertheless, note that $Y_k$ is the only value that can be accepted by \ALGO but not by \ALGC. Hence, for $1\leq i\leq k-1$, if $\ALGO = Y_i$ then $\ALGC=Y_i$, and if $\ALGC= Y_i$ then either $\ALGO=Y_i$ or $\ALGO=Y_k$. Thus, we can write the following identity for the case $1\leq i \leq k-1$.
\begin{align}
    \PP(\ALGO=Y_i) = \PP(\ALGC =Y_i) - \PP(\ALGC=Y_i \text{ and } \ALGO =Y_k)\, . \label{eq:ALGO_and_ALGC}
\end{align}
Thus, we just need to compute the negative term in \cref{eq:ALGO_and_ALGC}.
In order to have $\ALGO=Y_k$ we need that $\theta_k>0$ and that
$\theta_j \not\in (\theta_k-1,\theta_k]$, for all $1\leq j \leq k-1$, with $j\neq k'$. Note that this is also a sufficient condition, because values that are smaller than $Y_k$ cannot be accepted by \ALGO. Thus, $\PP(\ALGO=Y_k)= 1/2^{k-1}$. If $i=k'$, then $\theta_i=\theta_{k'}=\theta_k-1$, which is negative if $\theta_k$ is positive,
so $\PP(\ALGC= Y_{k'}$ and $\ALGO=Y_k)=0$.

Assume now that $i\neq k'$. Conditional on $\ALGO=Y_k$, we have that $\ALGC=Y_i$ if $Y_i$ arrives after $Y_k$ and all other values in $\{Y_1,...,Y_k\}$ arrive either before the window of $Y_k$ or after $Y_i$, i.e., if
$\theta_k<\theta_i$ and $\theta_j \in (-1,\theta_k-1)\cup (\theta_i,1]$, for all $j\leq k-1$, with $j\not\in \{i,k'\}$. This is equivalent to say that $C(\theta_i) =
\min \{0, C(\theta_1),...,C(\theta_{k'-1}),C(\theta_{k'+1}),..., C(\theta_{k-1})\}$. Note that \cref{lem:independent} implies that conditional on $\ALGO=Y_k$, the variables
$\{0-(\theta_k-1), C(\theta_1)-(\theta_k-1),...,C(\theta_{k'-1})-(\theta_k-1),C(\theta_{k'+1})-(\theta_k-1),..., C(\theta_{k-1})-(\theta_k-1)\}$ are independent and uniformly distributed in $[0,1]$. Then,
$\PP(\ALGC=Y_i|\ALGO =Y_k)=1/(k-1)$.

We conclude the first formula by replacing the just computed probability and the formula of \cref{lem:dist_algc} in \cref{eq:ALGO_and_ALGC}. For the second one, we have that $\PP(\ALGO\geq Y_k)= \sum_{i=1}^k
\PP(\ALGO=Y_i) = \PP(\ALGO=Y_k)- (k-2)\frac{1}{2^{k-1}(k-1)}+  \sum_{i=1}^{k-1}\PP(\ALGC=Y_i)=\frac{1}{2^k}- (k-2)\frac{1}{2^{k-1}(k-1)} +\PP(\ALGC\geq Y_k)
 = \PP(\ALGC\geq Y_k) + \frac{1}{2^{k-1}(k-1)}$.
\end{proof}

\begin{lem}
\label{lem:dist_algf}
For every $1\leq i \leq k-1$,
\begin{align*}
\PP(\ALGF=Y_i)&=\frac{1}{2^{k-1}}\frac{1}{k-1} + \sum_{j=i+1}^{k-1}\frac{1}{2^j(j-1)}
\shortintertext{and}
\PP(\ALGF\geq Y_k) &= \frac12\,.
\end{align*}
\end{lem}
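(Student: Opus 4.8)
The plan is to obtain a clean description of the value at which \ALGF stops, phrased in terms of the ordered values $Y_1,\dots,Y_k$ and the random arrival times $\theta_1,\dots,\theta_{2n}$, and then to sum over cases. Let $p$ denote the smallest index with $Y_p\in\UU$, i.e.\ the smallest index with $\theta_p\le 0$; then $\max\UU=Y_p$, and by minimality of $p$ every value strictly larger than $\max\UU$, namely $Y_1,\dots,Y_{p-1}$, lies in $\DD$. Because the threshold used by \ALGF throughout the scanning of the hidden side is the \emph{constant} $\max\UU$, I claim that \ALGF stops exactly at the first of $Y_1,\dots,Y_{p-1}$ to arrive, that is, at the $Y_i$ with $i<p$ minimizing $\theta_i$ among $\theta_1,\dots,\theta_{p-1}$ (and it never stops when $p=1$). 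Proving this claim is the one step that needs care: I must rule out \ALGF stopping earlier at a smaller value, but any hidden value $Y_m$ with $0<\theta_m<\theta_i$ satisfies $Y_m\le\max\UU$ by the choice of $p$, hence cannot clear the threshold; and since nothing has been taken before $\theta_i$, at time $\theta_i$ the threshold is still exactly $\max\UU<Y_i$, so $Y_i$ is taken. One could alternatively derive this from \cref{lem:MWLemma}, but the constant-threshold argument is the most direct route.

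Granting the characterization, fix $1\le i\le k-1$ and write $\PP(\ALGF=Y_i)=\sum_{p=i+1}^{k}\PP\big(p\text{ is the smallest index in }\UU\text{ and }\theta_i=\min\{\theta_1,\dots,\theta_{p-1}\}\big)$, noting that $p>k$ is impossible since $k$ and its couple $k'<k$ lie on opposite sides of a card. For $i<p\le k-1$ the relevant event only involves $\theta_1,\dots,\theta_p$, which are independent and uniform on $(-1,1]$ because $Y_1,\dots,Y_{k-1}$ sit on distinct cards; fixing the signs $\theta_1,\dots,\theta_{p-1}>0\ge\theta_p$ costs $2^{-p}$, and, conditioned on that, the $p-1$ positive coordinates are i.i.d.\ uniform, so $\theta_i$ is the minimum with probability $1/(p-1)$, giving $2^{-p}/(p-1)$. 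The term $p=k$ is special because $\theta_k=C(\theta_{k'})$ with $k'<k$: once $\theta_1,\dots,\theta_{k-1}>0$ we automatically get $\theta_{k'}>0$, hence $\theta_k=\theta_{k'}-1\le 0$, so $Y_k\in\UU$ and $p=k$ for free; this contributes $2^{-(k-1)}\cdot\frac{1}{k-1}$. Adding up yields
\[
\PP(\ALGF=Y_i)=\frac{1}{2^{k-1}(k-1)}+\sum_{j=i+1}^{k-1}\frac{1}{2^{j}(j-1)}.
\]

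For the second identity, \ALGF can never stop at $Y_k$ (when $Y_k\in\DD$ its couple $Y_{k'}>Y_k$ is visible), so $\PP(\ALGF\ge Y_k)=\sum_{i=1}^{k-1}\PP(\ALGF=Y_i)$; interchanging the order of summation, the inner count $\sum_{i=1}^{j-1}1=j-1$ cancels the factor $j-1$ in the denominator and leaves $\frac{1}{2^{k-1}}+\sum_{j=2}^{k-1}2^{-j}=\frac12$. Even more directly, by the characterization above \ALGF stops (necessarily with a value exceeding $Y_k$) if and only if $p\ge 2$, i.e.\ if and only if $Y_1\in\DD$, which has probability exactly $\tfrac12$. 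The main obstacle in the whole argument is really just the careful justification of the stopping characterization and the separate, couple-sensitive treatment of the $p=k$ term; everything else is a short computation.
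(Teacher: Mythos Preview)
Your proof is correct and follows essentially the same approach as the paper: you condition on the index $p$ of $\max\UU$ (the paper calls this event $F_j$), observe that conditional on $p$ the algorithm picks whichever of $Y_1,\dots,Y_{p-1}$ has the smallest arrival time, and handle the boundary case $p=k$ separately using the coupling $\theta_k=C(\theta_{k'})$. Your extra verification of $\PP(\ALGF\ge Y_k)=\tfrac12$ by summing the first formula is a pleasant sanity check that the paper omits, but otherwise the two arguments are the same.
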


\begin{proof}
Let $F_j$ be the event that $Y_j$ is the maximum value ($j$ is the minimum index) such that $\theta_j<0$. Roughly speaking, we condition on each $F_j$ and show that $\ALGF=Y_i$ if $Y_i$ is the value with earliest arrival time from $\{Y_1,...,Y_{j-1}\}$.

Note that the events $F_j$ are pairwise disjoint, and since either $\theta_k$ or $\theta_{k'}$ is negative, $\PP(F_j)=0$ if $j>k$. Thus, the events $F_j$ for $1\leq j\leq k$ form a partition of the probability space. Note also that conditional on $F_j$, by definition \ALGF cannot accept any value smaller than $Y_j$. Therefore
$\PP(\ALGF=Y_i)= \sum_{j=i+1}^k \PP(\ALGF=Y_i| F_j) \PP(F_j)$.

For $j\leq k-1$ the arrival times $\theta_1,...,\theta_j$ are all independent and uniform in $(-1,1]$, so $\PP(F_j)=1/2^j$. Conditional on $F_j$ we have that $\theta_1,...,\theta_{j-1}$ are independent and uniform in $[0,1]$. Moreover, conditional on $F_j$, \ALGF simply accepts the element in $\{Y_1,...,Y_{j-1}\}$ with earliest arrival time, as they are exactly the ones larger than $\max \UU= Y_j$. Hence, $\PP(\ALGF =Y_i| F_j)= 1/(j-1)$.

The case of $F_k$ is slightly different. $F_k$ is equivalent to the event that $\theta_1,...,\theta_{k-1}>0$, because $\theta_{k'}>0$ implies that $\theta_k<0$, so $\PP(F_k)=1/2^{k-1}$. Again \ALGF simply accepts the first element larger than $Y_k$, so it accepts $Y_i$ with probability $1/(k-1)$. This concludes the proof of the first formula.

For computing $\PP(\ALGF\geq Y_k)$, note that this is simply the probability that $\theta_1>0$, which is $1/2$. This is because $Y_1$ is the overall largest value, so if it arrives before $0$, nothing can be accepted, and if it arrives after $0$, then $Y_1$ itself (and possibly other values with earlier arrival times) can be accepted. 
\end{proof}

\subsection{The combined algorithm}
We now use the distributions obtained in the last section in order to obtain an improved randomized algorithm. To this end we combine \cref{lem:dist_algo,lem:dist_algc,lem:dist_algf} to conclude that for $1\leq i \leq k-1$,
\begin{align}
    \PP( \ALG^*_r=Y_i)&= \frac{1-r}{2^{k-1}} +\frac{r-\mathds{1}_{\{i\neq k'\}}\cdot (1-r)/2}{2^{k-1}(k-1)} + \sum_{j=i+1}^{k-1}\frac{1}{2^j}\left((1-r) - \frac{(1-r)}{j} + \frac{r}{j-1}\right) \notag \\
    &\geq 
    \frac{1-r}{2^{k-1}} +\frac{r- (1-r)/2}{2^{k-1}(k-1)} + \sum_{j=i+1}^{k-1}\frac{1}{2^j}\left((1-r) - \frac{(1-r)}{j} + \frac{r}{j-1}\right)\,.
    \label{eq:dist_algr}
\end{align}
The inequality comes from the fact that $-\mathds{1}_{\{i\neq k'\}}\geq -1$. 
Now we use the following lemma to complete the summation in \cref{eq:dist_algr} using the extra term.

\begin{lem}
\label{lem:tail_series}
For every $\frac{3-4\ln(2)}{5-6\ln(2)}\leq r\leq 2/3$, and any $k\geq 2$,
\[\sum_{j\geq k} \frac{1}{2^j}\left(1-r - \frac{1-r}{j} + \frac{r}{j-1}\right) \leq \frac{1-r}{2^{k-1}}+\frac{r-(1-r)/2}{2^{k-1}(k-1)}\, .\]
\end{lem}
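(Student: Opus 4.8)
The plan is to compute the left-hand side in closed form and then verify the claimed inequality for $r$ in the stated range. First I would split the tail sum into three pieces:
\[
S(k) := \sum_{j\geq k}\frac{1}{2^j}\left(1-r-\frac{1-r}{j}+\frac{r}{j-1}\right)
= (1-r)\sum_{j\geq k}\frac{1}{2^j} - (1-r)\sum_{j\geq k}\frac{1}{2^j j} + r\sum_{j\geq k}\frac{1}{2^j(j-1)}.
\]
The first sum is geometric and equals $2^{-(k-1)}$. For the second and third, the natural tool is the integral representation $\frac{1}{2^j j}=\int_0^{1/2}x^{j-1}\,dx$, already used in the proof of \cref{thm:main_secretary}; reindexing $j\mapsto j-1$ in the third sum gives $\sum_{j\geq k}\frac{1}{2^j(j-1)}=\tfrac12\sum_{m\geq k-1}\frac{1}{2^m m}$. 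So everything reduces to the single quantity $G(k):=\sum_{m\geq k}\frac{1}{2^m m}$, and we get
\[
S(k) = \frac{1-r}{2^{k-1}} - (1-r)G(k) + \frac{r}{2}G(k-1).
\]
Since $G(k-1) = G(k) + \frac{1}{2^{k-1}(k-1)}$, this becomes
\[
S(k) = \frac{1-r}{2^{k-1}} + \frac{r}{2^{k}(k-1)} + \Bigl(\tfrac{r}{2}-(1-r)\Bigr)G(k)
     = \frac{1-r}{2^{k-1}} + \frac{r-(1-r)/2}{2^{k-1}(k-1)} + \frac{3r-2}{2}\,G(k) + \frac{1-r}{2^{k-1}(k-1)} - \frac{1-r}{2^{k}(k-1)},
\]
where I have rewritten $\frac{r}{2^k(k-1)}$ to match the target's $\frac{r-(1-r)/2}{2^{k-1}(k-1)}$ term. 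Collecting the leftover $\frac{1}{2^k(k-1)}$ contributions, the claimed inequality $S(k)\le \frac{1-r}{2^{k-1}}+\frac{r-(1-r)/2}{2^{k-1}(k-1)}$ is equivalent to
\[
\frac{3r-2}{2}\,G(k) + \frac{1-r}{2^{k}(k-1)} \le 0,
\qquad\text{i.e.}\qquad
(2-3r)\,G(k) \ge \frac{1-r}{2^{k-1}(k-1)}.
\]

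The condition $r\le 2/3$ is exactly what makes the coefficient $2-3r$ nonnegative, so the inequality is meaningful; the remaining work is to show it holds down to $r = \frac{3-4\ln 2}{5-6\ln 2}$. I would prove that the function $k\mapsto 2^{k-1}(k-1)G(k)$ is \emph{non-increasing} in $k$ for $k\ge 2$ — intuitively $G(k)\approx \frac{1}{2^{k-1}k}$ for large $k$ so this product tends to $1$, while it is larger for small $k$ — so that the hardest case is $k\to\infty$, giving the requirement $(2-3r)\cdot 1 \ge 1-r$, which rearranges to $r\le 2/3$; but one also needs the worst \emph{finite} $k$, which turns out to be $k=2$: there $G(2)=\sum_{m\ge 2}\frac{1}{2^m m} = \ln 2 - \tfrac12$, $2^{k-1}(k-1)=2$, and the inequality becomes $2(2-3r)(\ln 2 - \tfrac12)\ge 1-r$. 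Solving this linear inequality for $r$ gives precisely $r \ge \frac{3-4\ln 2}{5-6\ln 2}$, which is the stated lower endpoint. So the proof structure is: (i) derive the closed form for $S(k)$ via the integral trick; (ii) reduce to $(2-3r)G(k)\ge \frac{1-r}{2^{k-1}(k-1)}$; (iii) show the ratio $\frac{1-r}{2^{k-1}(k-1)}\big/G(k)$ — equivalently $2^{k-1}(k-1)G(k)$ from below or above as needed — is monotone in $k$ so that $k=2$ and $k\to\infty$ are the extreme cases; (iv) check both endpoints explicitly, obtaining the two bounds on $r$.

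The main obstacle I anticipate is step (iii): proving the right monotonicity of $2^{k-1}(k-1)G(k)$ in $k$. A clean way is to compute the discrete difference $2^{k}k\,G(k+1) - 2^{k-1}(k-1)G(k)$ using $G(k+1)=G(k)-\frac{1}{2^k k}$, which gives $2^{k-1}G(k)(2k-(k-1)) - (k-1)/k \cdot$ (something) $= 2^{k-1}(k+1)G(k) - \tfrac{k-1}{k}$, and then bound $G(k)$ above and below by comparing the series $\sum_{m\ge k}\frac{1}{2^m m}$ with $\frac{1}{k}\sum_{m\ge k}\frac1{2^m} = \frac{1}{2^{k-1}k}$ (giving $G(k)\le \frac{1}{2^{k-1}k}$, hence one direction immediately) and with a sharper lower bound such as $G(k)\ge \frac{1}{2^{k-1}k} - \frac{1}{2^{k}k(k+1)}\cdot(\text{const})$ to control the difference's sign. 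Alternatively, and perhaps more simply, one can avoid monotonicity entirely: since $G(k)\le 2^{-(k-1)}/k$ is elementary and $G(k)\ge 2^{-(k-1)}/k \cdot \frac{k}{k+1}\cdot\frac{1}{1 - 1/(2(k+1))}$-type bounds are available, one plugs a two-sided estimate of $G(k)$ directly into $(2-3r)G(k)\ge \frac{1-r}{2^{k-1}(k-1)}$ and checks it term-by-term for all $k\ge 2$, treating $k=2$ separately by the exact value $G(2)=\ln 2-\tfrac12$. Everything else — the geometric sum, the reindexing, the integral representation — is routine and already appears in the paper.
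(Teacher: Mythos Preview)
Your overall strategy is the same as the paper's: express $S(k)$ in terms of the tail $G(k)=\sum_{m\ge k}\frac{1}{2^m m}$, reduce the lemma to an inequality of the form $(2-3r)\cdot 2^{k-1}(k-1)G(k)\ge c(r)$, and dispose of $k$ by monotonicity of $H(k):=2^{k-1}(k-1)G(k)$. However, your execution contains two concrete errors that would make the written proof fail.

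First, an algebra slip: the correct reduced inequality is $(2-3r)\,G(k)\ge \dfrac{1-2r}{2^{k-1}(k-1)}$, not $\dfrac{1-r}{2^{k-1}(k-1)}$. Starting from your own (correct) identity $S(k)=\frac{1-r}{2^{k-1}}+\frac{r}{2^k(k-1)}+\frac{3r-2}{2}G(k)$ and subtracting the target $\frac{1-r}{2^{k-1}}+\frac{3r-1}{2^k(k-1)}$, the remainder is $\frac{3r-2}{2}G(k)-\frac{2r-1}{2^k(k-1)}$. With your $1-r$ version, the $k=2$ check becomes $(2-3r)(2\ln 2-1)\ge 1-r$, i.e.\ $r(4-6\ln 2)\ge 3-4\ln 2$; since $4-6\ln 2<0<3-4\ln 2$, this forces $r$ negative, so the endpoint $\frac{3-4\ln 2}{5-6\ln 2}$ cannot fall out of that computation as you claim. (Likewise, $(2-3r)\ge 1-r$ gives $r\le 1/2$, not $r\le 2/3$.)

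Second, the monotonicity direction is backwards: $H(k)$ is \emph{increasing}, not non-increasing. Reindexing gives $H(k)=\sum_{j\ge 1}\frac{k-1}{2^{j}(j+k-1)}$, and each summand $\frac{k-1}{j+k-1}$ is increasing in $k$; numerically $H(2)=2\ln 2-1\approx 0.386$, $H(3)\approx 0.546$, and $H(k)\uparrow 1$. So, with $r\le 2/3$ (so $2-3r\ge 0$), the binding case of $(2-3r)H(k)\ge 1-2r$ is $k=2$, where it rearranges exactly to $r(5-6\ln 2)\ge 3-4\ln 2$. This is precisely the paper's argument; once you fix the two slips your plan and the paper's proof coincide.
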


\begin{proof}
Rearranging terms, we obtain the following.
\begin{align}
\sum_{j\geq k} \frac{1}{2^j}\left(1-r - \frac{1-r}{j} + \frac{r}{j-1}\right) &=\frac{1-r}{2^{k-1}}-\sum_{j\geq k}\frac{1-r}{2^j j}+\sum_{j\geq k-1}\frac{r/2}{2^j j}\notag \\
&=\frac{1-r}{2^{k-1}}+\frac{r/2}{2^{k-1}(k-1)}-\sum_{j\geq k}\frac{1-3r/2}{2^j j}\notag \\
&=\frac{1-r}{2^{k-1}}+\frac{r/2}{2^{k-1}(k-1)}-\frac{(1-3r/2)}{2^{k-1}(k-1)}\sum_{j \geq 1 }\frac{(k-1)}{2^{j}(j+k-1)}.\label{EQ1}
\end{align}
Therefore, it is enough to prove that $-(1-3r/2)\sum_{j \geq 1 }\frac{(k-1)}{2^{j}(j+k-1)}
\leq r-1/2$.
Since $r\leq 2/3$, the term $(1-3r/2)$ is positive. Note that for $k\geq 2$, we have $\sum_{j \geq 1 }\frac{(k-1)/(j+k-1)}{2^{j}}\geq \sum_{j \geq 1 }\frac{1}{2^{j}(j+1)}=2\ln(2)-1$.
Thus, it would be enough to prove that $-(1-3r/2)(2\ln 2-1) \leq r-1/2$. Rearranging the terms in
the last expression and multiplying by $-1$, we obtain the equivalent condition $3-4\ln 2 \leq r(5 - 6\ln 2)$.
\end{proof}

In what follows, we apply \cref{lem:tail_series} to derive a bound on the distribution of the accepted element that does not depend on $k$. Then, we select a specific value for $r$ that balances these bounds, and allows us to approximate the distribution of $\max \DD$. We define now the function
\begin{align}
    a(r):= \sum_{j\geq 2} \frac{1}{2^j} \left( (1-r)-\frac{(1-r)}{j} + \frac{r}{j-1}
    \right)
    = 1- \ln 2 + \frac{r(3\ln 2 - 2)}{2}.
    \label{eq:def_a}
\end{align}
This function appears in the next lemma we prove as the approximation factor of the distribution of $\max\DD$.
\begin{lem}
\label{lem:geometric_distribution}
For $r^*= \frac{3-4\ln 2}{5-6\ln 2}$ we have that for all $1\leq i \leq k-1$,
\begin{align}
    \PP( ALG^*_{r^*} = Y_i) \geq \frac{1}{2^{i}} 2 a(r^*)\; .
    \label{eq:lemma_geometric_distribution}
\end{align}
\end{lem}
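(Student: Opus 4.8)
The plan is to use the $k$-independent lower bound on $\PP(\ALG^*_r = Y_i)$ obtained from \cref{eq:dist_algr} and \cref{lem:tail_series}, and then verify that at $r = r^* = \frac{3-4\ln 2}{5-6\ln 2}$ this bound equals exactly $\frac{1}{2^i}\,2a(r^*)$. First I would check that $r^*$ lies in the admissible range $[\frac{3-4\ln 2}{5-6\ln 2}, \frac23]$ of \cref{lem:tail_series}; indeed $r^*$ is the left endpoint, so the lemma applies (with the inequality there being tight, which is what makes $r^*$ the right choice). Applying \cref{lem:tail_series} with the summation index shifted to start at $j=k$, the extra term $\frac{1-r^*}{2^{k-1}} + \frac{r^*-(1-r^*)/2}{2^{k-1}(k-1)}$ in \cref{eq:dist_algr} dominates the tail $\sum_{j\geq k}\frac{1}{2^j}\bigl(1-r^* - \frac{1-r^*}{j} + \frac{r^*}{j-1}\bigr)$, so from \cref{eq:dist_algr} we get
\[
\PP(\ALG^*_{r^*} = Y_i) \;\geq\; \sum_{j=i+1}^{k-1}\frac{1}{2^j}\Bigl(1-r^* - \tfrac{1-r^*}{j} + \tfrac{r^*}{j-1}\Bigr) \;+\; \sum_{j\geq k}\frac{1}{2^j}\Bigl(1-r^* - \tfrac{1-r^*}{j} + \tfrac{r^*}{j-1}\Bigr) \;=\; \sum_{j\geq i+1}\frac{1}{2^j}\Bigl(1-r^* - \tfrac{1-r^*}{j} + \tfrac{r^*}{j-1}\Bigr).
\]

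Next I would evaluate this infinite tail in closed form and relate it to $a(r^*)$. The key structural observation is that the summand $g(j) := \frac{1}{2^j}\bigl(1-r - \frac{1-r}{j} + \frac{r}{j-1}\bigr)$ is, up to the geometric factor $2^{-j}$, a function whose partial sums telescope nicely: by the same manipulation used in the proof of \cref{lem:tail_series} (splitting $\frac{r}{j-1}$ and reindexing), one finds that $\sum_{j\geq m} g(j)$ has a clean expression, and in particular that for $m \geq 2$,
\[
\sum_{j\geq m} g(j) \;=\; \frac{1}{2^{m-1}}\Bigl(\text{(geometric part)} - (1-\tfrac{3r}{2})\textstyle\sum_{j\geq 1}\frac{1}{2^j(j+m-1)}\Bigr).
\]
What I actually need is the comparison $\sum_{j\geq i+1} g(j) \geq \frac{1}{2^i}\cdot 2\sum_{j\geq 2} g(j) = \frac{1}{2^i}\cdot 2a(r^*)$, i.e.\ that the "per-$2^{-i}$ density" of the tail is minimized (over $i\geq 1$) at $i=1$. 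Since $a(r)$ is precisely $\sum_{j\geq 2} g(j)$ by definition \cref{eq:def_a}, and $2^{-i}\cdot 2a(r^*) = 2^{-(i-1)} a(r^*)$, the claim is that $\sum_{j\geq i+1} g(j) \geq 2^{-(i-1)}\sum_{j\geq 2} g(j)$ for all $i\geq 1$, with equality at $i=1$. This reduces to showing that the ratio $2^{i-1}\sum_{j\geq i+1} g(j)$ is non-decreasing in $i$, equivalently that $g(j) \cdot 2^{j}$ — the "de-geometrized" term $1-r - \frac{1-r}{j}+\frac{r}{j-1}$ — is increasing in $j$ for $j\geq 2$ when $r=r^*$; this last monotonicity is an elementary one-variable calculus check (differentiate $\frac{r}{j-1} - \frac{1-r}{j}$ in $j$ and use that $r^* > 0$ together with the sign of $1-r^*$).

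The main obstacle I anticipate is the bookkeeping in the second step: matching the $k$-truncated sum plus the explicit $O(2^{-k})$ correction term against the honest infinite tail, and then getting the telescoping/closed-form evaluation of $\sum_{j\geq m} g(j)$ exactly right — in particular keeping track of the index shift between the $\frac{1-r}{j}$ and $\frac{r}{j-1}$ pieces (the latter reindexes to $\sum_{j\geq m-1}\frac{r/2}{2^j j}$, producing the isolated $\frac{r/2}{2^{m-1}(m-1)}$ boundary term that is responsible for the $-\mathds{1}_{\{i\neq k'\}}$ slack being harmless). Once the closed form is in hand, the monotonicity argument showing $i=1$ is the worst case is routine. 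I would also double-check at the end that plugging $r=r^*$ into \cref{eq:def_a} and into the tail-vs-correction inequality of \cref{lem:tail_series} both hold with the intended (near-)equality, since it is this simultaneous tightness that pins down the constant $\alpha = \frac{4-5\ln 2}{5-6\ln 2}$ in \cref{thm:main_prophet}.
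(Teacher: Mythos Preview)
Your overall strategy matches the paper's: apply \cref{lem:tail_series} (with $r^*$ the left endpoint of its admissible range) to turn \cref{eq:dist_algr} into the $k$-free tail $\sum_{j\ge i+1} g(j)$ with $g(j)=2^{-j}\bigl(1-r-\tfrac{1-r}{j}+\tfrac{r}{j-1}\bigr)$, and then show this tail is at least $2^{-(i-1)}a(r^*)$ by arguing that the worst $i$ is $i=1$. So far so good.

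The gap is in your monotonicity step. You claim that the ``de-geometrized'' term $h(j)=2^{j}g(j)=1-r^*-\tfrac{1-r^*}{j}+\tfrac{r^*}{j-1}$ is increasing in $j$ for all $j\ge 2$, and that this is an elementary calculus check. It is not: the derivative condition is $\frac{1-r^*}{r^*}\ge \frac{j^2}{(j-1)^2}$, and at $j=2$ the right side is $4$ while $\frac{1-r^*}{r^*}\approx 2.70$. Concretely, $h(2)=\tfrac{1+r^*}{2}\approx 0.635$ while $h(3)=\tfrac{4-r^*}{6}\approx 0.622$, so $h$ dips from $j=2$ to $j=3$. Your sufficient condition therefore fails precisely at the step $i=1\to i=2$, and the word ``equivalently'' is also incorrect (it is only a sufficient condition). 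The paper deals with exactly this obstruction by treating $i=2$ separately: it checks directly that $S_2:=\sum_{j\ge 3}g(j)=a(r^*)-\tfrac{1+r^*}{8}\ge a(r^*)/2$, and this holds \emph{with equality} because $4a(r^*)=1+r^*$ is precisely the defining equation of $r^*$. Only for $i\ge 3$ does the paper use the term-by-term comparison $g(j)\ge \tfrac12 g(j-1)$, which reduces to $\tfrac{j-2}{2(j-1)}\ge r$ and holds for $j\ge 4$ since $r^*\approx 0.270\le \tfrac13$.

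So your proposal is salvageable and close in spirit to the paper, but you must split off the case $i=2$ and verify it by hand; the global monotonicity shortcut does not go through.
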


\begin{proof}
Using \cref{lem:tail_series} in \cref{eq:dist_algr} we can write the following for $r^* \leq r\leq 2/3$.
\begin{align}
    \PP(ALG^*_{r} =Y_i) \geq \sum_{j\geq i+1}
    \frac{1}{2^j} \left( (1-r)-\frac{(1-r)}{j}+ \frac{r}{j-1} \right)\,.
    \label{eq:simpler_dist_algr}
\end{align}
Note that this immediately gives the desired bound for $i=1$ and any $r$ in the interval $[r^*,2/3]$. We will prove directly that it also holds for $i=2$ and will proceed by induction for $i\geq 3$. For $i=2$, \cref{eq:simpler_dist_algr} combined with the definition of $a(r)$ gives that
\begin{align*}
    \PP(ALG^*_{r} =Y_2)  & \geq a(r) - \frac{1}{4}\left( (1-r)
    -\frac{(1-r)}{2} + r \right)\\ 
    & = a(r) - \frac{1+r}{8}\,.
\end{align*}
Therefore, it is enough to prove that $a(r^*) -\frac{1+r^*}{8}\geq a(r^*)/2$, which is equivalent to $4 a(r^*)\geq 1+r^*$. If we replace $a(r^*)$ with the explicit formula in the right hand of \cref{eq:def_a} and rearrange terms, we obtain the inequality
$3-4\ln(2) \geq r^*(5-6\ln(2))$. Note that this is satisfied with equality by $r^*$.

For $i\geq 3$ we simply prove that if $r\leq 1/3$, then
\begin{align}
    \sum_{j\geq i+1}
    \frac{1}{2^j} \left( (1-r)-\frac{(1-r)}{j}+ \frac{r}{j-1} \right)
    \geq \frac{1}{2} \sum_{j\geq i}
    \frac{1}{2^j} \left( (1-r)-\frac{(1-r)}{j}+ \frac{r}{j-1} \right)\,. \label{eq:recurrence_sum}
\end{align}
In fact, note that we can change the index in the right-hand side of \cref{eq:recurrence_sum} to get the same range as in the sum of the left-hand side. So when we write the inequality for each term of the two summations, we obtain
\begin{align*}
    \frac{1}{2^j}\left( (1-r) -\frac{1-r}{j} + \frac{r}{j-1} \right) &\geq 
    \frac{1}{2^j}\left( (1-r) - \frac{1-r}{j-1} + \frac{r}{j-2} \right)& \Leftrightarrow \\
    \frac{1}{j-1} - \frac{1}{j} & \geq r \left( \frac{1}{j-2} - \frac{1}{j} \right) & \Leftrightarrow \\
    \frac{j-2}{2(j-1)} &\geq r\,. &
\end{align*}
which holds whenever $j\geq 4$ and $r\leq 1/3$. Since $r^*=\frac{3-4\ln 2}{5-6\ln 2} \approx 0.270$, we conclude that
\cref{eq:recurrence_sum} holds for $r^*$, and therefore, \cref{eq:lemma_geometric_distribution} holds for all $1\leq i \leq k-1$.
\end{proof}

\begin{lem}
\label{lem:k_geometric_distribution}
For any $k\geq 2$, we have that $\PP( ALG^*_{r^*} \geq Y_k) \geq  2 a(r^*)$.
\end{lem}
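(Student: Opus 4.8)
The plan is to mimic the structure of the proof of \cref{lem:geometric_distribution}, but now for the ``tail'' event $\{ALG^*_{r^*}\geq Y_k\}$ instead of the atoms $\{ALG^*_{r^*}=Y_i\}$. First I would assemble the exact formula for $\PP(ALG^*_{r^*}\geq Y_k)$ by taking the convex combination, with weights $(1-r^*)/2$, $(1-r^*)/2$, $r^*$, of the three quantities $\PP(\ALGC\geq Y_k)$, $\PP(\ALGO\geq Y_k)$, $\PP(\ALGF\geq Y_k)$ computed in \cref{lem:dist_algc,lem:dist_algo,lem:dist_algf}. Concretely this gives
\begin{align*}
\PP(ALG^*_{r}\geq Y_k) &= \frac{1-r}{2}\left(\sum_{j=1}^{k-1}\frac{1}{2^j j}\right) + \frac{1-r}{2}\left(\frac{1}{2^{k-1}(k-1)}+\sum_{j=1}^{k-1}\frac{1}{2^j j}\right) + r\cdot\frac12\\
&= (1-r)\sum_{j=1}^{k-1}\frac{1}{2^j j} + \frac{(1-r)/2}{2^{k-1}(k-1)} + \frac{r}{2}.
\end{align*}
So the task reduces to showing $(1-r^*)\sum_{j=1}^{k-1}\frac{1}{2^j j} + \frac{(1-r^*)/2}{2^{k-1}(k-1)} + \frac{r^*}{2} \geq 2a(r^*)$ for all $k\geq 2$.

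Next I would argue that the left-hand side is non-decreasing in $k$, so that it suffices to check the bound at $k=2$. The increment from $k$ to $k+1$ is $(1-r^*)\frac{1}{2^k k} + \frac{(1-r^*)/2}{2^k k} - \frac{(1-r^*)/2}{2^{k-1}(k-1)}$; a short computation (analogous to the one in \cref{lem:F_decreasing}, and to the rearrangement in \cref{lem:tail_series}) should show this is nonnegative for $k\geq 2$, the point being that the extra ``$1/(2^{k-1}(k-1))$'' correction term shrinks slower than the new series term grows is false — rather, one checks the signs work out because the series term $(1-r^*)/(2^k k)$ plus half of it dominates the difference of the correction terms. Alternatively, and more robustly, I would sidestep monotonicity entirely: since $\sum_{j=1}^{k-1}\frac{1}{2^j j} \geq \sum_{j=1}^{1}\frac{1}{2^j j} = \frac12$ already at $k=2$ and the sum only grows, while the correction term $\frac{(1-r^*)/2}{2^{k-1}(k-1)}$ is always nonnegative, it is enough to prove $(1-r^*)\cdot\frac12 + \frac{r^*}{2} + \frac{(1-r^*)/2}{2}\geq 2a(r^*)$ at $k=2$ — wait, that drops a positive term for larger $k$, so I must be careful; cleaner is to note $\sum_{j=1}^{k-1}\frac1{2^jj}$ plus the correction is increasing, verify it directly at $k=2$, and invoke monotonicity.

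Finally, at $k=2$ the inequality becomes an explicit numerical identity. Substituting $\sum_{j=1}^{1}\frac1{2^j j}=\frac12$ and the correction term $\frac{(1-r^*)/2}{2}=\frac{1-r^*}{4}$, the left side is $\frac{1-r^*}{2}+\frac{1-r^*}{4}+\frac{r^*}{2}=\frac34-\frac{r^*}{4}$. Using the explicit formula $a(r)=1-\ln 2 + \frac{r(3\ln 2-2)}{2}$ from \cref{eq:def_a}, the target $\frac34-\frac{r^*}{4}\geq 2a(r^*)$ rearranges, after multiplying out, into a linear inequality in $r^*$ whose boundary is exactly the defining equation $3-4\ln2 = r^*(5-6\ln 2)$ of $r^*$; so it holds with equality (just as the $i=2$ case of \cref{lem:geometric_distribution} did). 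The main obstacle I anticipate is getting the monotonicity-in-$k$ step clean — the correction term $1/(2^{k-1}(k-1))$ makes the increment's sign non-obvious — but this is the same flavor of one-line rearrangement used in \cref{lem:F_decreasing} and \cref{lem:tail_series}, and if it proves annoying I would instead bound $\sum_{j=1}^{k-1}\frac1{2^jj}\geq \frac12$ uniformly and absorb the rest into the nonnegative correction term, reducing everything to the $k=2$ check plus the observation that all dropped terms are nonnegative.
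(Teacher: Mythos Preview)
Your assembly of the formula
\[
\PP(ALG^*_{r}\geq Y_k) \;=\; (1-r)\Bigl(\tfrac{1}{2^k(k-1)} + \sum_{j=1}^{k-1}\tfrac{1}{2^j j}\Bigr) + \tfrac{r}{2}
\]
is correct and matches the paper. The gap is in the monotonicity step. Writing $G(k)=\tfrac{1}{2^k(k-1)}+\sum_{j=1}^{k-1}\tfrac{1}{2^j j}$, the increment is
\[
G(k+1)-G(k)=\frac{1}{2^k k}+\frac{1}{2^{k+1}k}-\frac{1}{2^k(k-1)}=\frac{k-3}{2^{k+1}k(k-1)},
\]
which is \emph{negative} at $k=2$. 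Concretely $G(2)=\tfrac34$ while $G(3)=G(4)=\tfrac{11}{16}$, so the left-hand side is \emph{not} non-decreasing in $k$; the minimum is at $k=3$, not $k=2$. Your fallback (``bound $\sum_{j=1}^{k-1}\tfrac{1}{2^j j}\geq \tfrac12$ and keep the nonnegative correction'') yields only $(1-r^*)\cdot\tfrac12+\tfrac{r^*}{2}=\tfrac12$, which is well below $2a(r^*)\approx 0.635$, so that route does not close the gap either.

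The paper's proof does exactly the computation above, identifies $k=3$ as the minimizer, and then checks $\tfrac{11-3r^*}{16}\geq 2a(r^*)$. This is the genuinely tight case: the slack is only $\tfrac{26\ln 2-18}{16(5-6\ln 2)}\approx 0.0016$. By contrast, at $k=2$ the inequality $\tfrac{3-r^*}{4}\geq 2a(r^*)$ holds with substantial slack and does \emph{not} rearrange to the defining equation $3-4\ln 2=r^*(5-6\ln 2)$ as you assert; equating the two sides gives $8\ln 2-5=r(12\ln 2-7)$, i.e.\ $r\approx 0.414\neq r^*$. So both your monotonicity claim and your ``equality at $k=2$'' claim are incorrect; the fix is simply to minimize $G$ properly and evaluate at $k=3$.
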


\begin{proof}
 Using \cref{lem:dist_algo,lem:dist_algc,lem:dist_algf} we get the following inequality.
 \begin{align}
     \PP( ALG^*_{r^*} \geq Y_k) = (1-r^*)\left(
     \frac{1}{2^k(k-1)} + \sum_{j=1}^{k-1} \frac{1}{2^j j}
     \right) + \frac{r^*}{2}\,. \label{eq:stop_prob_ineq}
 \end{align}
 Define now the function $G(k)=  \frac{1}{2^k(k-1)} + \sum_{j=1}^{k-1} \frac{1}{2^j j}$. We minimize $G$ to obtain a general lower bound for \cref{eq:stop_prob_ineq}. We have that
 \begin{align*}
     G(k+1)-G(k) &= \frac{1}{2^k k} + \frac{1}{2^{k+1} k} - \frac{1}{2^k (k-1)} \\
     &= \frac{2(k-1)+ (k-1) - 2k}{2^{k+1}k(k+1)}\\
     &= \frac{k-3}{2^{k+1}k(k+1)}\,.
 \end{align*}
 Hence, $G(2)\geq G(3)$, and $G(k)\leq G(k+1)$ for $k\geq 3$, so $G$ is minimized when $k=3$. Thus,
 \begin{align*}
    \PP( ALG^*_{r^*} \geq Y_k) &\geq (1-r^*)\left(
    \frac{1}{16} + \frac{1}{2} + \frac{1}{8} \right)
    + \frac{r^*}{2}\\
    &= \frac{11-3r^*}{16}\\
    &= \frac{55-66\ln 2 -9 + 12\ln 2}{16(5-6\ln 2)}\\
    &= \frac{16(4-5\ln 2)}{16(5-6\ln 2)} +
    \frac{26\ln 2 - 18}{16(5-6\ln 2)}\\
    &\geq 2 a(r^*).
 \end{align*}
 The last inequality comes from the fact that
 $\frac{26\ln 2 -18}{16(5-6\ln 2)}\approx 0.00162\geq 0$ and that
 \begin{align*}
 a(r^*)&=1-\ln 2 + \frac{r(3\ln 2 -2)}{2}\\
 &= \frac{(2-2\ln 2)(5-6\ln 2) + (3-4\ln 2)(3\ln 2-2)}{2(5-6\ln 2)}\\
 &= \frac{4-5\ln 2}{2(5-6\ln 2)}.\qedhere
 \end{align*}
\end{proof}
With the last two lemmas we are ready to prove the main theorem of this section.
\begin{proof}[Proof of \cref{thm:main_prophet}.]
Summing the lower bound in \cref{lem:geometric_distribution} it follows that
for all $1\leq i \leq k-1$,
\begin{align*}
    \PP(ALG^*_{r^*}\geq Y_i) & \geq 2 a(r^*) \cdot \left(
    1- \frac{1}{2^i} \right)\\
    &= 2 a(r^*) \PP(\max\DD \geq Y_i)\,.
\end{align*}
From \cref{lem:k_geometric_distribution} we get that also $\PP(ALG^*_{r^*}\geq Y_k)\geq 2a(r^*)\cdot
\PP(\max\DD\geq Y_k)$. Therefore, as $\max\DD\in \{Y_1,...,Y_k\}$ with probability $1$, we conclude that $\E(ALG^*_{r^*})\geq 2a(r^*)\cdot \E(\max\DD)$.
\end{proof}

\section{Large Data Sets (large $k$)}

In this section we consider the case in which $k$ is large and show that for that case one can obtain better guarantees for the prophet two-sided game of Googol. This case appears very often, for instance, in the i.i.d.~prophet secretary problem with unknown (continuous) distributions, all permutations of $[2n]$ are equally likely to be the ordering of the $2n$ values in the cards. The probability that $k$ is at least $k_0$ equals the probability that the top $t$ elements of the list are written in different cards. Note that for any $\ell=o(\sqrt{n})$, 
\begin{align*}
\PP(k\geq \ell)&=\left(1- \frac{1}{2n-1}\right)\cdot \left(1- \frac{2}{2n-2}\right)\cdots \left(1- \frac{\ell-1}{2n-(\ell-1)}\right)\geq \left(1-\frac{\ell}{2n-\ell}\right)^{\ell}\\
&\geq 1-\frac{\ell^2}{2n-\ell}\geq 1-o(1).
\end{align*}

The algorithm that we use uses a moving window of length strictly larger than 1, as outlined in Section \ref{sec:basic_alg}.

\paragraph{Moving window algorithm of length $1+t$ (\ALG(t)).}  We first draw $n$ uniform variables in $[0,1]$ and we sort them from smallest to largest\footnote{We assume that the drawn values are all distinct as this happens with probability 1}  as $0<\tau_1<\tau_2<\dots<\tau_n<1$. We interpret $\tau_j$ as the \emph{arriving time} of the $s$-th hidden value that we reveal, and therefore, $\tau_s-1 = C(\tau_j)$ is the arriving time of the corresponding $s$-th face up value. The algorithm will accept the $s$-th face up value $x=D_{\sigma(s)}$ if and only if $x$ is larger than any value arrived in the previous $1+t$ time units, i.e. if and only if $x$ is larger than all elements arriving in $[\max(\tau_s-1-t,-1), \tau_s]$

Observe that $\ALG(0)$ and $\ALG(1)$ are exactly the algorithms $\ALGC$ and $\ALGF$ defined in previous sections. Below we analyze the performance of $\ALG(t)$.

\begin{lem}
\label{lem:dist_algt}
For every $1\leq i \leq k-1$,
\begin{align}\PP(\ALG(t)=Y_i)&=b(k,t) +\sum_{j=i+1}^{k-1}\frac{a(j,t)}{2^j} \label{dist1},\\
\text{ where }\quad  a(j,t)&=  \frac{1-(1-t)^{j-1}(1+t)}{j-1} + (1-t)^{j-1}(1+t) -\frac{(1-t)^{j}}{j},\nonumber\\
b(k,t)&=\frac{1}{2^{k-1}}\left( \frac{1-(1-t)^{k-1}}{k-1} + (1-t)^{k-1}\right),
\notag\shortintertext{and}
\PP(\ALG(t)\geq Y_k)&=\frac{1}{2} + \sum_{i=2}^{k-1}\frac{(1-t)^{j}}{2^{j} j}.\nonumber
\end{align}
\end{lem}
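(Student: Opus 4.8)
The plan is to follow the template of the proof of \cref{lem:dist_algc}, working throughout with the random arrival times interpretation, and to account for the two features that distinguish $\ALG(t)$ from $\ALGC$: its window $[\max(\theta-1-t,-1),\theta)$ now has length $1+t>1$ (so it may contain several of the top values), and it is truncated at the left endpoint $-1$ precisely when the arrival time is smaller than $t$ (so the effective window length $1+\theta$ is coupled with $\theta$). The key tools are unchanged: \cref{lem:MWLemma} to decide when the algorithm stops at a given value, and \cref{lem:independent} (applied after rescaling a length-$(1+t)$ window, or a truncated length-$(1+\theta_i)$ window, back to $[0,1)$) to turn those stopping events into explicit probabilities. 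As in \cref{lem:dist_algc} I would use that, since $Y_1,\dots,Y_{k-1}$ lie on distinct cards, $\theta_1,\dots,\theta_{k-1}$ are i.i.d.\ uniform on $(-1,1]$ while $\theta_k=C(\theta_{k'})$.

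For $\PP(\ALG(t)=Y_i)$ with $1\le i\le k-1$, I would fix $i$, condition on $\theta_i>0$ (i.e.\ $Y_i\in\DD$), and partition according to the smallest index $j$ of a value whose arrival time lies in the window $W_i=[\max(\theta_i-1-t,-1),\theta_i)$ of $Y_i$. If $j<i$ then $\ALG(t)\neq Y_i$, so only $j\ge i+1$ and the couple index $j=k$ contribute; the couple case is handled exactly as for $\ALGC$, since $Y_{k'}$ always sits in $W_k$ and is larger than $Y_k$, so conditioning on $Y_k=\max W_i$ the algorithm cannot stop before reaching $Y_i$. For a generic index $i+1\le j\le k-1$, I would split on whether $\theta_i\ge t$ (window of full length $1+t$, with $\PP(j$ is the first index in $W_i$ and $\theta_i\ge t) = (1-t)^{j-1}(1+t)/2^{j}$) or $0<\theta_i<t$ (window truncated to $[-1,\theta_i)$, handled by integrating over $\theta_i\in(0,t)$). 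In each sub-case \cref{lem:MWLemma} rewrites ``$\ALG(t)$ does not stop before $Y_i$'' as the statement that a suitable maximum among the shifted arrival times is not attained at $\theta_j$, and \cref{lem:independent} supplies the corresponding factor $1-1/j$ (full-length case) together with the arithmetic of the truncated case. Adding the two sub-cases and the $j=k$ contribution should collapse to exactly $b(k,t)+\sum_{j=i+1}^{k-1}a(j,t)/2^{j}$; the summand $\tfrac{1-(1-t)^{j-1}(1+t)}{j-1}$ inside $a(j,t)$ is exactly what collects the truncated sub-case (and the cases where $W_i$ misses all of $Y_1,\dots,Y_k$), while $b(k,t)$ collects the analogous contributions attached to $j=k$. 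As a sanity check, setting $t=0$ recovers $a(j,0)=1-1/j$ and $b(k,0)=2^{-(k-1)}$, matching \cref{lem:dist_algc}, and $t=1$ recovers \cref{lem:dist_algf}.

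For $\PP(\ALG(t)\ge Y_k)$, the probability that $\ALG(t)$ stops at all, I would partition on the event $E_j$ that $Y_j=\max\DD$ (equivalently $\theta_1,\dots,\theta_{j-1}\le 0<\theta_j$), with $\PP(E_j)=2^{-j}$ for $j<k$, $\PP(E_k)=2^{-(k-1)}$ and $\PP(E_j)=0$ for $j>k$. Conditional on $E_k$, the couple $Y_{k'}$ lies in $W_k$ and exceeds $Y_k$, so the algorithm never stops. Conditional on $E_j$ with $j<k$, \cref{lem:MWLemma} says the algorithm stops iff $Y_j$ beats its own window; since $\theta_\ell\le 0$ for $\ell<j$, this happens iff none of $\theta_1,\dots,\theta_{j-1}$ falls in $[\max(\theta_j-1-t,-1),0)$, and conditioning further on $\theta_j$ (uniform on $(0,1]$) this probability equals $(\theta_j-t)^{j-1}$ when $\theta_j\ge t$ and $0$ otherwise (and $1$ when $j=1$). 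Integrating, $\int_t^1(\theta_j-t)^{j-1}\,d\theta_j=(1-t)^j/j$ for $j\ge 2$, so multiplying by $\PP(E_j)$ and summing gives $\tfrac12+\sum_{j=2}^{k-1}\tfrac{(1-t)^j}{2^j j}$.

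The main obstacle is the bookkeeping in the $\PP(\ALG(t)=Y_i)$ computation: one has to track carefully the truncated-window regime $\theta_i<t$ and the possibility that $W_i$ contains none of the top-$k$ values, and then verify that all of these pieces — together with the clean length-$(1+t)$ contributions and the special couple term — reassemble into the compact closed forms $a(j,t)$ and $b(k,t)$. By comparison, the stopping-probability formula $\PP(\ALG(t)\ge Y_k)$ is a short computation once the partition by $\max\DD$ is in place.
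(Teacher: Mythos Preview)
Your computation of $\PP(\ALG(t)\ge Y_k)$ is correct and matches the paper's argument essentially verbatim.

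The plan for $\PP(\ALG(t)=Y_i)$, however, has a real gap in the ``full-length'' sub-case $\theta_i\ge t$. You assert that, conditional on $j$ being the smallest index with $\theta_j\in W_i$ and on $\theta_i\ge t$, \cref{lem:independent} yields the factor $1-1/j$ for the probability that $\ALG(t)$ does not stop before $Y_i$. That is true only at $t=0$. The $\ALGC$ argument works because its window has length exactly $1$: the complement map $C$ sends $(-1,1]\setminus W_i$ bijectively onto $W_i$, so each $\theta_\ell\notin W_i$ can be replaced by $C(\theta_\ell)\in W_i$, and the stopping event becomes ``$\theta_j$ is the maximum of $j$ i.i.d.\ uniforms on a common interval''. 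For $t>0$ the window $W_i$ has length $1+t$ while its complement has length $1-t$, so no such bijection exists; moreover $W_j=[\max(\theta_j-1-t,-1),\theta_j)$ is not a translate of $W_i$ inside $(-1,1]$ (it reaches further left, or is itself truncated). Hence ``$\theta_j>0$ and $Y_j<\max W_j$'' does not reduce to a symmetric maximum statement, and the conditional probability is in fact
\[
1-\frac{t}{(1+t)(j-1)}-\frac{1-t}{(1+t)j},
\]
which equals $1-1/j$ only when $t=0$.

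The paper does not try to rescue \cref{lem:independent} here. After conditioning on $J=j$ it splits further according to whether $\theta_j<0$, $0\le\theta_j<t$, or $t\le\theta_j<\theta_i$ (so four cases altogether, including your truncated case $\theta_i\le t$), and evaluates each by a direct double integral in $(\theta_i,\theta_j)$; summing the four pieces gives $a(j,t)/2^j$, and the $J=k$ case gives $b(k,t)$. Your handling of the truncated sub-case and of $j=k$ is on the right track; what is missing is this finer case split and the explicit integration when $\theta_i>t$.

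A small aside: the parenthetical about ``cases where $W_i$ misses all of $Y_1,\dots,Y_k$'' cannot occur, since $|W_i|\ge 1$ forces $\theta_k$ or $\theta_{k'}$ into $W_i$; thus $J\le k$ always.
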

\begin{proof}
Fix an index $i\leq k-1$. Recall that $\theta_i\in [-1,1)$ denotes the arriving time of $Y_i$. Let $J$ be the random variable denoting the index of the largest element arriving in the window of $Y_i$, or equivalently, the smallest index such that  $\theta_j\in [\theta_i-1-t,\theta_i)$. Since $[\max(\theta_i-1-t,-1),\theta_i)$ has length at least $1$, it always contains $\theta_k$ or $C(\theta_k)=\theta_{k'}$ (or both), we conclude that $J\leq k$.

Note that the algorithm selects $Y_i$ if and only if the next three events occur: (I) $\theta_i> 0$, (II) $J>i$, as otherwise, $Y_J>Y_i$  would be inside the window of $Y_i$, and (III) No  element is selected before $Y_i$'s arrival. By Lemma \ref{lem:MWLemma}, the third event occurs if and only if the maximum element\footnote{If no element arrives in $[0,\theta_i)$ then event (III) also occurs} $X$ arriving in $[0,\theta_i)$ is smaller than some element in its own window. Note that this happens in two cases: either $\theta_J<0$ (and therefore $Y_J$ is in the window of $X$), or $\theta_J\geq 0$ and $Y_J$ is smaller than some element arriving in its own  window.

In what follows fix some $j$ with $i+1\leq j\leq k-1$. We will compute the probability of both selecting $Y_i$ and $J=j$.

\paragraph{Case 1: $0\leq \theta_i\leq t$.} In order to impose that $J=j$ we need that all $j-2$ arrival times $(\theta_l: l\in [j-1]\setminus\{i\})$ are outside the interval $[-1,\theta_i]$ and $\theta_j$ is inside. Furthermore, in this case every element arriving in $[0,\theta_j)$ is strictly smaller than $Y_j$. Therefore in order to impose that no element is selected before $Y_i$'s arrival we need that $\theta_j<0$. It follows that
\begin{align*}
\PP(\ALG(t)=Y_i, J=j, \theta_i\in [0,t])&=\frac12\int_0^t\PP(\theta_j<0)\prod_{l\in [j-1]\setminus \{i\}}\PP(\theta_l>1-x)dx\\
&=\frac12\int_0^t\frac12 \left(\frac{1-x}{2}\right)^{j-2}dx = \frac{1}{2^j}\cdot\frac{1-(1-t)^{j-1}}{j-1}.
\end{align*}

\paragraph{Case 2: $t< \theta_i\leq 1$ and $\theta_j<0$.} We now need that all $j-2$ arrival times $\theta_l, l\in [j-1]\setminus\{i\}$ fall outside $[\theta_i-t-1,\theta_i]$ and $\theta_j$ is inside. Since we are imposing $\theta_j<0$, we actually require that $\theta_j\in [\theta_i-t-1,0]$. Since $\theta_J<0$ this is enough to guarantee that $Y_i$ will be selected. Therefore,
\begin{align*}
&\PP(\ALG(t)=Y_i, J=j, \theta_i\in (t,1] \wedge \theta_j<0)\\
&=\frac12\int_t^1\PP(\theta_j\in [x-t-1,0])\prod_{l\in [j-1]\setminus \{l\}}\PP(\theta_l\not\in [x-t-1,x])dx\\
&=\frac12 \int_t^1\frac{1+t-x}{2} \left(\frac{1-t}{2}\right)^{j-2}dx =   \frac{(1-t)^{j-1}(1+t)}{2^{j+1}}.
\end{align*}

\paragraph{Case 3: $t< \theta_i\leq 1$ and $0\leq \theta_j<t$.} As before we need that all $j-2$ arrival times $\theta_l, l\in [j-1]\setminus\{i\}$ fall outside $[\theta_i-t-1,\theta_i]$ and $\theta_j\in [0,t)$. Since $\theta_j\geq 0$ we also need that at least one of $Y_l, l\in [j-1]\setminus \{j\}$ arrives in $Y_j$'s window $[-1,\theta_j)$ and since they must be outside $Y_i$'s window this reduces to the event that not all $j-2$ arrival times fall in the interval $(\theta_i,1]$. Therefore,
\begin{align*}
&\PP(\ALG(t)=Y_i, J=j,  \theta_i\in (t,1] \wedge 0\leq \theta_j<t)\\
&=\frac12\int_t^1\PP(\theta_j\in [0,t))\PP(\forall l\in [j-1]\setminus \{i\}, \theta_l\not \in [x-t-1,x] \text{ and not all in } (x,1]) dx\\
&=\frac12\int_t^1\frac{t}{2} \left(\left(\frac{1-t}{2}\right)^{j-2}-\left(\frac{1-x}{2}\right)^{j-2}\right)dx=\frac{t(1-t)^{j-1}}{2^j} - \frac{t(1-t)^{j-1}}{2^j(j-1)}
\end{align*}

\paragraph{Case 4: $t< \theta_i\leq 1$ and $t\leq \theta_j<\theta_i$.} Once again we need that all $j-2$ arrival times $\theta_l, l\in [j-1]\setminus\{i\}$ fall outside $[\theta_i-t-1,\theta_i]$ and $\theta_j\in [t,\theta_i)$. Since $\theta_j\geq 0$ we also need that at least one of $Y_l, l\in [j-1]\setminus \{j\}$ arrives in $Y_j$'s window $[\theta_j-1-t,\theta_j)$ and since they must be outside $Y_i$'s window this reduces to the event that not all $j-2$  arrival times fall in the set $[-1,\theta_j-t-1)\cup (\theta_i,1]$. Therefore,
\begin{align*}
&\PP(\ALG(t)=Y_i \wedge J=j \wedge \theta_i\in (t,1] \wedge t\leq \theta_j<x)\\
&=\frac12\int_t^1\PP(\theta_j\in [t,x))\PP(\forall l\in [j-1]\setminus \{i\}, \theta_l\not \in [x-t-1,x] \text{ and not all in } [-1,\theta_j-t-1)\cup (x,1]) dx\\
&=\frac12\int_t^1\int_t^x \left(\left(\frac{1-t}{2}\right)^{j-2}-\left(\frac{1-x+y-t}{2}\right)^{j-2}\right)dydx\\
&=\frac{1}{2^j}\int_t^1 (1-t)^{j-2}(x-t) - \frac{(1-t)^{j-1}-(1-x)^{j-1}}{j-1}dx =\frac{(1-t)^{j}}{2^{j+1}} - \frac{(1-t)^j}{2^jj}.
\end{align*}

Putting all cases together we get that for all $1\leq i\leq k-1$ and $i+1\leq j\leq k-1$,
\begin{align}
\PP(\ALG(t)=Y_i \wedge J=j) &=  \frac{1}{2^j}\cdot\left( \frac{1-(1-t)^{j-1}(1+t)}{j-1}+(1-t)^{j-1}(1+t)- \frac{(1-t)^j}{j}\right).\label{akt}
\end{align}

Now let us consider the case $J=k$. In this case we only need to impose that all $k-2$ arrival times $\theta_l, l\in [k-1]\setminus\{i\}$ fall outside the window of $Y_i$. No matter who  the couple of $k$ is, the previous condition implies that $k$ is inside the window of $Y_i$. Therefore,
\begin{align}
\PP(\ALG(t)=Y_i \wedge J=k)&=\frac{1}{2}\int_{0}^t\left(\frac{1-x}{2}\right)^{k-2}dx + \left(\frac{1-t}{2}\right)^{k-1}\notag\\
&=\frac{1}{2^{k-1}}\left( \frac{1-(1-t)^{k-1}}{k-1} + (1-t)^{k-1}\right). \label{bkt}
\end{align}

By combining \eqref{akt} and \eqref{bkt} together we finish the proof of the first equality of this lemma, \eqref{dist1}.

We now compute the probability that $\ALG(t)$ selects one of the top $k$ values $Y_1,\dots, Y_k$. Observe first that the algorithm never selects a value $Y_j$ with $j>k$. This is because the window of $Y_j$ always contains $\theta_k$ or $C(\theta_k) = \theta_{k'}$, and both $Y_k$ and $Y_{k'}$ are larger than $Y_j$. By this observation, $\PP(\ALG(t)\geq Y_k)$ equals the probability that the algorithm stops by selecting something.

Let $M$ be the random variable denoting the index of the largest value arriving after 0, i.e. $\theta_M\ge 0$ and $\theta_i<0$, for all $i<M$. By Lemma \ref{lem:MWLemma}, $\ALG(t)$ stops by selecting a value if and only if $Y_M$ is larger than every value arriving in its window $[\max(\theta_M-1-t,1),\theta_M))$. Observe that if $M=k$, this is impossible because the couple of $k$ is always in that interval. Therefore,
\begin{align*}
\PP(\ALG(t)\geq Y_k)&=\sum_{j=1}^k \PP(\ALG(t)\geq Y_k, M=j)\\
&=\sum_{j=1}^k \frac12 \int_{0}^1\PP(\forall i\in [j-1], \theta_i\in [-1, x-1-t)dx.\\
&=\frac{1}{2} + \sum_{j=2}^{k-1}\frac{1}{2}\int_t^1 \left(\frac{x-t}{2}\right)^{j-1}dx = \frac{1}{2}+\sum_{j=2}^{k-1}\frac{(1-t)^j}{2^j j}.\qedhere
\end{align*}
\end{proof}     

Algorithm $\ALG(t)$ has a very poor performance for $k=2$. For example, in the instance $\{(1,1-\varepsilon),(\varepsilon,0)\}$, we have $Y_1=1$, $Y_2=1-\varepsilon$ and $k=2$. By Lemma \ref{lem:dist_algt},
$\PP(\ALG(t)\geq Y_2)= 1/2$. Therefore $\E(\ALG(t))\leq 1/2$, while $\E(\max \DD)\geq 1-\varepsilon$.

We will now  study the behaviour of $\ALG(t)$ when  $k$ is large. For that, define
\begin{align*}
R_k(i,t)&=\frac{\PP(\ALG(t)\geq Y_i)}{\PP(\max \DD\geq Y_i)}=\frac{1}{1-(1/2)^i}\sum_{l=1}^i \sum_{j=l+1}^{k-1}b(k,t)+ \frac{a(j,t)}{2^j}.\\
R_k(t)&=\PP(\ALG(t) \text{ stops}) = \frac{1}{2} + \sum_{j=2}^{k-1}\frac{(1-t)^j}{2^j j}.\intertext{ Consider also their limits as $k\to \infty$, which since $kb(k,t) \to 0$ can be computed as } 
R_\infty(i,t)&=\frac{1}{1-(1/2)^i}\sum_{l=1}^i \sum_{j=l+1}^{\infty}\frac{a(j,t)}{2^j}.\\
R_\infty(t)&=\frac{1}{2} + \sum_{j=2}^{\infty}\frac{(1-t)^j}{2^j j} = \frac{t}{2}+\ln\left(\frac{2}{1+t}\right).
\end{align*}

Recall that the competitiveness of $\ALG(t)$ is at least $\min_{i\geq 1} R_\infty(i,t)$, and we are looking for the value $t$ that maximizes this minimum. 
\begin{figure}[t]
\centering
\includegraphics{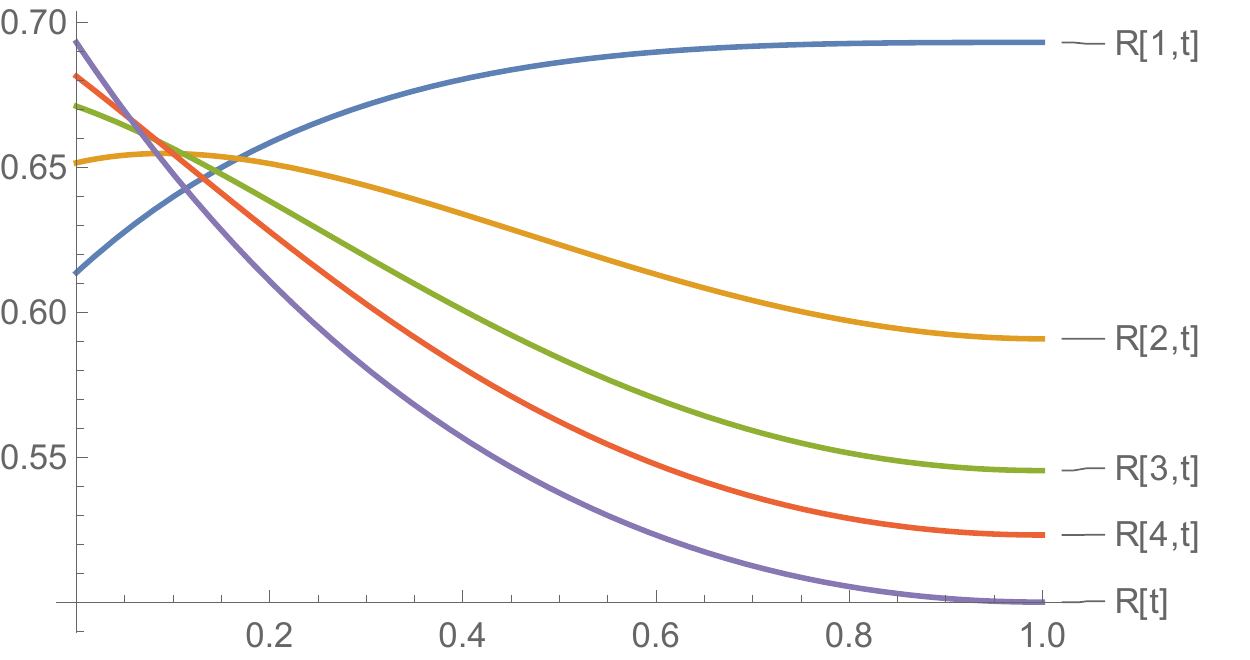}
\caption{Auxiliary functions to analyze $\ALG(t)$, as $k$ tends to $\infty$.}
\label{grafico2}
\end{figure}
In Figure \ref{grafico2} we plot a few of these $R_\infty(i,t)$, for small values of $i$, together with function $R_\infty(t)$. From the plot we observe that the $t^*$ maximizing the minima of all the curves satisfies $R_\infty(1,t^*)=R_\infty(t^*)$. By simplifying the sums, we obtain that $t^*\approx 0.1128904$ is the only root of:
\[R_\infty(1,t)-R_\infty(t)=2-(5 t)/2-(3+t) \ln 2 +(4+t) \ln (1+t).\]
To formally prove that $R_\infty(t^*)\approx 0.6426317$ is the sought guarantee we need the following lemma.
\begin{lem}\label{lem:algt} Let $a(j,t)$ be defined as in Lemma \ref{lem:dist_algt}. For all $j\geq 3$, we have $a(j,t^*)\geq a(j+1,t^*)$
\end{lem}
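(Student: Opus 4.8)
The plan is to reduce $a(j,t)$ to a workable closed form, form the one--step difference, and then split the analysis into an easy ``tail'' and a handful of genuinely tight indices. First, splitting the first fraction of the formula for $a(j,t)$ in \cref{lem:dist_algt} as $\tfrac1{j-1}-\tfrac{(1+t)(1-t)^{j-1}}{j-1}$ and combining the second piece with the standalone $(1+t)(1-t)^{j-1}$ term, one gets
\[
a(j,t)=\frac{1}{j-1}+(1+t)(1-t)^{j-1}\,\frac{j-2}{j-1}-\frac{(1-t)^{j}}{j}.
\]
Subtracting the same expression at $j+1$ and doing routine algebra (the harmonic part contributes $\tfrac1{j-1}-\tfrac1j=\tfrac1{j(j-1)}$) yields the compact identity
\[
a(j,t)-a(j+1,t)=\frac{1}{j(j-1)}+(1-t)^{j-1}B(j,t),\qquad
B(j,t):=(1+t)t-\frac{1+t}{j-1}+\frac{(1-t)t}{j}+\frac{(1-t)^{2}}{j+1}.
\]
The key structural point is that, for $t\in(0,1)$, the only negative term in $B(j,t)$ is $-\tfrac{1+t}{j-1}$; thus $B(j,t)\to(1+t)t>0$ as $j\to\infty$, so $B(j,t^{*})<0$ only for finitely many $j$, and wherever $B(j,t^{*})\ge 0$ the lemma is immediate since then $a(j,t^{*})-a(j+1,t^{*})\ge\tfrac1{j(j-1)}>0$.

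Next I would pin down the range where $B(j,t^{*})\ge 0$. Multiplying by $j(j-1)(j+1)>0$ converts this into positivity of the cubic
\[
P(j):=j(j-1)(j+1)\,B(j,t^{*})=(1+t^{*})t^{*}\,j^{3}-2t^{*}j^{2}-2\bigl(1+(t^{*})^{2}\bigr)j-t^{*}(1-t^{*}).
\]
Using only a coarse enclosure of $t^{*}$ one checks $P(6)>0$, $P'(6)>0$, and --- since $P''(j)=6(1+t^{*})t^{*}j-4t^{*}>0$ for $j\ge 6$ --- that $P'$ is increasing on $[6,\infty)$; hence $P>0$ on $[6,\infty)$. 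This proves $a(j,t^{*})\ge a(j+1,t^{*})$ for every $j\ge 6$ (and $B(j,t^{*})<0$ turns out to happen only for $j\in\{3,4,5\}$).

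It remains to treat $j\in\{3,4,5\}$, where $B(j,t^{*})<0$ and so $a(j,t^{*})-a(j+1,t^{*})=\tfrac1{j(j-1)}-(1-t^{*})^{j-1}|B(j,t^{*})|$ is a difference of two \emph{comparable} positive quantities --- for $j=3$ the surplus is only about $10^{-2}$. These three cases I would settle by direct evaluation done rigorously: fix a short rational interval $I$ containing $t^{*}$ (verified via the sign of the defining function $R_\infty(1,t)-R_\infty(t)$ at the endpoints of $I$, which is monotone there), then check for each fixed $j\in\{3,4,5\}$ that $t\mapsto a(j,t)-a(j+1,t)$ is increasing on $I$ (a one--variable derivative sign check on an explicit elementary function), so that $a(j,t^{*})-a(j+1,t^{*})\ge a(j,\inf I)-a(j+1,\inf I)$, and the right-hand side is an explicit rational number that one verifies is positive.

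The main obstacle is exactly this last step: for $j=3$ the inequality is genuinely tight, so a lazy estimate such as $|B(j,t^{*})|\le\tfrac{1+t^{*}}{j-1}$ is useless and a bare decimal approximation of the transcendentally defined $t^{*}$ is not a proof --- one must simultaneously localize $t^{*}$ in a narrow enough interval and control the variation of $a(j,\cdot)-a(j+1,\cdot)$ across it. Everything else (the algebraic reduction, the recombination into $B(j,t)$, and the cubic positivity argument for $j\ge 6$) is mechanical.
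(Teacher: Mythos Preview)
Your proposal is correct and follows essentially the same approach as the paper: both rewrite $a(j,t)-a(j+1,t)$ in exactly the same form (your $B(j,t)$ is precisely the negative of the paper's right-hand side after rearranging), then observe that the dominant positive term $t(1+t)$ makes the inequality trivially hold for large $j$, leaving a finite set of small $j$ to verify directly. The only differences are in execution: the paper uses the cruder bound $-B(j,t^*)\le (1+t^*)\bigl(\tfrac{1}{j-1}-t^*\bigr)$ together with $t^*\ge 1/9$ to get nonpositivity for $j\ge 10$, then simply tabulates $a(j,t^*)$ for $3\le j\le 10$; you work harder via the cubic $P(j)$ to push the threshold down to $j\ge 6$, and for $j\in\{3,4,5\}$ you propose a monotonicity-plus-interval argument rather than bare tabulation. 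Both routes are valid; the paper's is shorter, while yours makes the dependence on the numerical value of $t^*$ more explicit.
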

\begin{proof}
Below, we tabulate a few values of $a(j,t^*)$ with $t^*\approx 0.1128904$, and we observe that  $a(j,t^*)\geq a(j+1,t^*)$ for all $j\in \{3,\dots, 9\}$:
\begin{center}
\begin{tabular}{|c|c|c|c|c|c|c|c|c|}
\hline 
$j$ & 3 & 4 & 5 & 6 & 7 & 8&9&10 \\ 
\hline 
$a(j,t^*)$ &0.705194 &0.696462 &0.65704 &0.607906 &0.556898&0.50734&0.460684 &0.417513 \\ 
\hline 
\end{tabular} 
\end{center}

Let us prove that the inequality also holds for $j\geq 10$. By rearranging terms, $a(j,t)\geq a(j+1,t)$ is equivalent to
\begin{align*}
\frac{1}{(1-t)^{j-1}}\left(\frac{1}{j-1}-\frac{1}{j}\right)&\geq -t(1+t)-\frac{t(1-t)}{j}-\frac{(1-t)^2}{j+1}+\frac{1+t}{j-1}.
\end{align*}
But note that the right hand side, evaluated at $t^*\geq 1/9$ is:
\begin{align*}
-t^*(1+t^*)-\frac{t^*(1-t^*)}{j}-\frac{(1-t^*)^2}{j+1}+\frac{1+t^*}{j-1}\leq (1+t^*)\left(\frac{1}{j-1}-t^*\right)\leq (1+t^*)\left(\frac{1}{j-1}-\frac{1}{9}\right),
\end{align*}
which is nonpositive for all $j\geq 10$. Then, it is smaller than the left hand side which is positive.
\end{proof}

\LargekThm*

\begin{proof}

We will prove that for all $i\geq 1$, $R_\infty(i,t^*)\geq R_\infty(t^*)$. Denote for all $i\geq 1$, $P_i = \lim_{k\to\infty} \PP(\ALG(t^*)=Y_i)=\sum_{j=i+1}^\infty a(j,t^*)/2^j$. By choice of $t^*$, $R_\infty(1,t^*)=R_\infty(t^*)\approx 0.6426317$. Furthermore we we can numerically evaluate $R_\infty(2,t^*)=\frac{P_1+P_2}{1-1/4}\approx 0.6547\geq R_\infty(t^*)$
and $R_\infty(3,t^*)=\frac{P_1+P_2+P_3}{1-1/8}\approx 0.654331 \geq R_\infty(t^*)$.

To finish the proof, we will show that for all $i\geq 4$, $R_\infty(i,t^*)\geq R_\infty(i+1,t^*)$, and therefore, for all $i\geq 4$, $R_\infty(i,t^*)\geq \lim_{j\to \infty} R_\infty(j,t^*)=R_\infty(t^*)$. For this, we will also need the inequality $P_1\approx 0.3213158\geq 0.304065\approx 8P_4$. By Lemma \ref{lem:algt}, for all $i\geq 2$,
\begin{align*}
P_{i}=\sum_{j=i+1}^\infty\frac{a(j,t^*)}{2^j} &\geq \sum_{j=i+1}^\infty\frac{a(j+1,t^*)}{2^j} =2P_{i+1}.\shortintertext{ Therefore, }
\sum_{\ell=1}^{i}P_\ell \geq 8P_4 + \sum_{\ell=2}^iP_{\ell} &\geq 2^iP_{i+1} + P_{i+1}\sum_{\ell=2}^{i}2^{i+1-\ell}=P_{i+1} (2^{i+1}-2).\shortintertext{We conclude that,}
R(i+1,t^*)=\frac{P_{i+1}+\sum_{\ell=1}^{i}P_\ell}{1-1/2^{i+1}} &\leq \frac{\displaystyle\left(1+\frac{1}{2^{i+1}-2}\right)\sum_{\ell=2}^i P_l }{\displaystyle\frac{2^{i+1}-1}{2^{i+1}}}=\frac{\displaystyle\sum_{\ell=2}^i P_l }{\displaystyle\frac{2^{i+1}-2}{2^{i+1}}} = R_\infty(i,t^*).\qedhere
\end{align*}
\end{proof}

To conclude this section we observe that the guarantee obtained in the $k=\infty$ case is still useful for moderately high values of $k$. Indeed, note that for all $j\geq 1$, and all $t\in (0,1)$,
\begin{align*}
|a(j,t)|\leq 2, |b(k,t)|\leq \frac{2}{2^{k-1}}.
\end{align*}
Therefore, for all $1\leq i\leq k-1$ and all $k\geq 3$.
$$
\left|\PP(\ALG(t)=Y_i)-\sum_{j=i+1}^{\infty}\ \frac{a(j,t)}{2^j}\right| =\left|b(k,t)-\sum_{j=k}^\infty \frac{a(j,t)}{2^j}\right|\leq \frac{2}{2^{k-1}} + 2\sum_{j=k}^\infty \frac{1}{2^j}=\frac{4}{2^{k-1}}=\frac{1}{2^{k-3}}.
$$
Therefore, 
$$
|R_k(i,t)-R_\infty(i,t)| =  \frac{1}{1-1/2^k}\left|\sum_{\ell=1}^i\left(\PP(\ALG(t)=Y_\ell) - \sum_{j=\ell+1}^{\infty} \frac{a(j,t)}{2^j}\right)\right|\leq \frac{2i}{2^{k-3}} \leq \frac{16k}{2^{k}}.
$$
We also have that
\begin{align*}
|R_k(t)-R_\infty(t)|&= \left|\sum_{j=k}^\infty \frac{(1-t)^j}{2^j j} \right| \leq \frac{1}{2^{k-1}}\leq \frac{16k}{2^{k}}.
\end{align*}

From here,
\begin{align*}
\min(R_k(t^*), \min_{1\leq i\leq k-1}R_k(i,t^*)) &\geq  
 \min(R_\infty(t^*), \min_{1\leq i\leq k-1}R_k(i,t^*))-\frac{16k}{2^{k}}=  R_\infty(t^*) - \frac{16k}{2^{k}}.
\end{align*}

Thus the guarantee of $\ALG(t)$ for not so large values of $k$ is already very close to $R_\infty(t^*)\approx 0.6426317$. For example, the guarantee of $\ALG(t)$ for $k\geq 20$ is at least $R_\infty(t^*)-0.00031$.


\section*{Acknowledgments}
This work was partially supported by CONICYT under grants CONICYT-PFCHA/Doctorado Nacional/2018-21180347, CONICYT-PFCHA/Magister Nacional/2018-22181138, Conicyt-Fon\-de\-cyt 1181180, Conicyt-Fondecyt 1190043 and PIA AFB-170001, and by an Amazon Research Award.

\end{document}